\title{Degree-restricted strength decompositions and algebraic branching programs}
\author{Fulvio Gesmundo}{Saarland University, Saarbr\"ucken, Germany}{gesmundo@cs.uni-saarland.de}{https://orcid.org/
0000-0001-6402-021X}{}
\author{Purnata Ghosal}{University of Liverpool, Liverpool, UK}{purnata.ghosal@liverpool.ac.uk}{}{DFG IK 116/2-1 and EPSRC EP/W014882/1}
\author{Christian Ikenmeyer}{University of Liverpool, Liverpool, UK}{christian.ikenmeyer@liverpool.ac.uk}{}{DFG IK 116/2-1 and EPSRC EP/W014882/1}
\author{Vladimir Lysikov}{QMATH, Department of Mathematical Sciences, University of Copenhagen, Denmark}{vl@math.ku.dk}
{https://orcid.org/0000-0002-7816-6524}{ERC 818761 and VILLUM FONDEN via the QMATH Centre of Excellence (Grant No. 10059)}
\authorrunning{F. Gesmundo, P. Ghosal, C. Ikenmeyer and V. Lysikov}
\keywords{Lower bounds, Slice rank, Strength of polynomials, Algebraic branching programs}
\newcommand{\bbF}{\mathbb{C}}
\newcommand{\bbZ}{\mathbb{Z}}
\newcommand{\abphom}{\operatorname{B}_{\mathrm{hom}}}
\newcommand{\str}{\operatorname{str}}
\newcommand{\sr}{\operatorname{sr}}
\newcommand{\id}{\operatorname{id}}
\newcommand{\bbA}{\mathbb{A}}
\newcommand{\bbC}{\mathbb{C}}
\newcommand{\bbP}{\mathbb{P}}
\newcommand{\Zero}{\operatorname{Z}}
\newcommand{\Sing}{\operatorname{Sing}}
\newcommand{\graph}{\operatorname{graph}}
\newcommand{\codim}{\operatorname{codim}}
\newcommand{\CH}{\operatorname{CH}}
\newcommand{\vvirg}{, \ldots ,}
\theoremstyle{remark}
\newtheorem{inclaim}{Claim}[theorem]
\begin{document}

\maketitle

\begin{abstract}
We analyze Kumar's recent quadratic algebraic branching program size lower bound proof method (CCC 2017) for the power sum polynomial.
We present a refinement of this method that gives better bounds in some cases.

The lower bound relies on Noether-Lefschetz type conditions on the hypersurface defined by the homogeneous polynomial. In the explicit example that we provide, the lower bound is proved resorting to classical intersection theory.

Furthermore, we use similar methods to improve the known lower bound methods for slice rank of polynomials.
We consider a sequence of polynomials that have been studied before by Shioda and show that for these polynomials the improved lower bound matches the known upper bound.
\end{abstract}

\section{Introduction}
Homogeneous algebraic branching programs are a fundamental machine model for the computation of homogeneous polynomials.
Their noncommutative version is completely understood (even in terms of border complexity) since Nisan's 1991 paper \cite{DBLP:conf/stoc/Nisan91};
they are the model of choice in \cite{burgisser2020rigid} for succinct presentation of a system of homogeneous polynomial equations;
and they can be used to phrase Valiant's famous determinant versus permanent question \cite{DBLP:conf/stoc/Valiant79a} in a homogeneous way:
Does the minimal size of the required homogeneous algebraic branching program for the permanent polynomial grow superpolynomially?
Phrasing Valiant's question in this way removes the ``padding'' problem in geometric complexity theory, so this is a way of circumventing the GCT no-go results in \cite{ikenmeyer2017rectangular,burgisser2019no}.

Even though Valiant's problem is the flagship problem in algebraic complexity theory, only very weak size bounds on homogeneous algebraic branching programs are known. The best lower bound so far was recently proved by Kumar~\cite{DBLP:journals/cc/Kumar19} for the power sum polynomial.
His proof and recent lower bounds proofs in other algebraic computation models (Chatterjee et~al.~\cite{DBLP:conf/coco/Chatterjee0SV20} for general algebraic branching programs, and Kumar and Volk~\cite{DBLP:conf/coco/0001V21} for determinantal complexity)
employ decompositions of the form
\begin{equation}
F = \sum_{k = 1}^r G_k H_k + R,
\end{equation}
where $F$ is the polynomial for which the lower bound is proven, and $G_k, H_k, R$ are polynomials such that $\deg R < \deg F$ and $G_k(0) = H_k(0) = 0$.
Kumar's recent $(d-1)\lceil\frac{n}{2}\rceil$ bound on the size of
homogeneous algebraic branching programs
can be proven by considering simpler decompositions
\begin{equation}\label{eq:strength-decomposition}
F = \sum_{k = 1}^r G_k H_k,
\end{equation}
where $F$ is a homogeneous polynomial and $G_k, H_k$ are homogeneous polynomials of degree strictly smaller than $F$. Decompositions of this form have been investigated in algebraic geometry as well: for instance, in~\cite{MR167458,MR781588}, they were used to study rational points on certain algebraic varieties; they appear in~\cite{MR2492444} to characterize complete intersections contained in a given hypersurface; recently, they were used in~\cite{MR4066476} to give a proof of Stillman's conjecture.
Following~\cite{MR4066476}, we say that~\eqref{eq:strength-decomposition} is a \emph{strength decomposition} of the polynomial $F$; the minimum $r$ for which a strength decomposition exists is called the \emph{strength} of $F$.
In the literature, the strength of $F$ is also called \emph{Schmidt rank} or \emph{$h$-invariant}.

In this paper we analyze Kumar's lower bound proof method.
The proof in~\cite{DBLP:journals/cc/Kumar19} is tailored to the power sum polynomial, but the method is applicable to every polynomial.
We present the general version of Kumar's technique in Section 3.1.
We refine this technique introducing the notion of $k$-restricted strength of a polynomial and we provide a lower bound for this notion based on geometric properties of the hypersurface defined by the polynomial.
These properties are based on the non-existence of subvarieties of low codimension and low degree in the associated hypersurface and can be interpreted as higher codimension versions of a Noether-Lefschetz type condition \cite{MR779603}.

We apply the refined method to give a lower bound on the size of a homogeneous algebraic branching program for an explicit family of polynomials
\[
  P_{n, d}(x_0, \dots, x_{2n}) = x_0^d + \sum_{k = 1}^n x_{2k - 1} x_{2k}^{d - 1};
\]
$P_{n, d}$ is a homogeneous polynomial of degree $d$ in $N = 2n + 1$ variables. 
Kumar's technique directly applied to this polynomial gives a lower bound $\lceil\frac{N}{4}\rceil (d - 1)$.
For $d < 2^{N/4}$ we improve the lower bound by an additive term of approximately ${N}/{2}$.
If the degree is exponential in~$N$, we get a further additive improvement of order $\frac{N}{2} d^{c/N}$, see~\Cref{cor:ourlowerbounds}(c).

In \Cref{section: slice rank}, we further study the notion of slice rank of homogeneous polynomials, which is a special case of $k$-restricted strength when $k = 1$.
\Cref{lem:sr-subspace-sing} gives a method to prove lower bounds on the slice rank.
Using this result, in \Cref{thm:exactslicerank} we compute the slice rank of polynomials
\[
S_{n, d}(x_0, \dots, x_{n+1}) = \sum_{i = 0}^{n - 1} x_i x_{i + 1}^{d - 1} + x_{n} x_0^{d - 1} + x_{n+1}^d.
\]
that have been studied by Shioda \cite{MR644520,MR726430}. 
For $n = 4$ and $6$ we find that the slice rank is equal to $\frac{N}{2} + 1$, where $N = n + 2$ is the number of variables.
To the authors' knowledge, this is the first lower bound for the slice rank better than $\lceil{N}/{2}\rceil$.
This translates to a lower bound $\frac{N}{2}(d - 1) + 2$ on the homoneous ABP complexity.
Again, this is the first lower bound better than Kumar's $\lceil\frac{N}{2}\rceil (d - 1)$.
We conjecture that these bounds continue to hold for $S_{n, d}$ with arbitrary even $n$, see \Cref{conj:lb}.

\section{Preliminaries}
\label{sec:prelim}

We work over the field of complex numbers $\mathbb{C}$. A homogeneous linear polynomial is called a \emph{linear form}. The ideal generated by polynomials $F_1, \dots, F_m$ is denoted by $\left< F_1, \dots, F_m \right>$. And ideal is homogeneous if it admits a set of homogeneous generators. Every (homogeneous) ideal in a polynomial ring admits a finite set of (homogeneous) generators \cite[Theorem 1.2]{Eis:CommutativeAlgebra}.

\subsection{Projective geometry}
\label{subsec:projvar}

Since we work with homogeneous polynomials, it is convenient to work in projective space $\bbP^{n}$, which is defined as $(\mathbb{C}^{n+1}\setminus\{0\})/\mathbb{C}^\times$, that is, points in $\bbP^n$ correspond to lines through the origin in $\bbC^{n + 1}$. Given a nonzero vector $v = (v_0 \vvirg v_n) \in \bbC^{n+1}$, write $[v] = (v_0 : \cdots : v_n)$ for the corresponding point in $\bbP^n$. We refer to \cite{Harris:AlgGeo} for basics of projective geometry and we only record some basic facts.
Given a homogeneous ideal $I = \langle F_1 \vvirg F_m\rangle$, write $\Zero(I) = \Zero(F_1 \vvirg F_m) = \{ [v] \in \bbP^n : F_j(v) = 0 \text{ for every $j$}\}$; a subset $X \subset \bbP^n$ is a variety if $X = \Zero(I)$ for some homogeneous ideal $I$. A variety is irreducible if it is not the proper union of two varieties; every variety $X$ can be written uniquely as finite union of finitely many irreducible subvarieties $X = \bigcup_1^r X_j$; $X_1 \vvirg X_r$ are called irreducible components of $X$.
We refer to \cite[Ch. 11 and Ch. 18]{Harris:AlgGeo} for the definitions and the basic properties of dimension and degree of a variety $X$, denoted respectively $\dim X$ and $\deg X$. The dimension of $\bbP^n$ is $n$. The codimension of $X \subseteq \bbP^n$ is $\codim X = n- \dim X$; if $X = \emptyset$, $\codim X = n+1$. A variety $X$ is called \emph{hypersurface} if all its irreducible components have codimension $1$; in this case $X = \Zero(F)$ is defined by a principal ideal $\langle F \rangle$.

A \emph{(projective) linear subspace} in $\bbP^n$ is a variety defined by linear forms. The codimension of $\Zero(L_1 \vvirg L_r)$, for some linear forms $L_1 \vvirg L_r$, equals the number of linearly independent elements among $L_1 \vvirg L_r$; in particular $\codim \Zero(L_1 \vvirg L_r) \leq r$.
A \emph{line} is a linear subspace of dimension $1$. The line spanned by two points $[x], [y] \in \bbP^{n}$ consists of all points of the form $[\alpha x + \beta y]$ with $(\alpha, \beta) \neq 0$.

Let $X \subset \bbP^n$ be a variety. The \emph{projective cone} over $X$ with vertex $p \notin X$ is the union of all lines connecting $p$ with a point in $X$.


Given a hypersurface $\Zero(F) \subseteq \bbP^n$, write $\Sing(F) = \Zero( \frac{\partial F}{\partial x_i} : i = 0 \vvirg n\}$, which is a subvariety of $\Zero(F)$. For example, if $F = x_0^d + x_1^d + x_2^d$, then $\Sing(F) = \Zero( x_0^{d-1} , x_1^{d-1}, x_2^{d-1} ) = \emptyset \subseteq \bbP^2$; if $F = x_0^d + x_1x_2^{d-1}$ then $\Sing(F) = \Zero(x_0^{d-1}, x_2^{d-1},x_1x_2^{d-2}) = \Zero(x_0,x_2) = \{ [0:1:0] \} \subseteq \bbP^2$. If the factorization of $F$ into irreducible polynomials does not have repeated factors, then $\Sing(F)$ coincides with the \emph{singular locus} of the hypersurface $\Zero(F)$, see,  e.g., in \cite[Ch. 14]{Harris:AlgGeo}.



We mention the following two fundamental results in algebraic geometry.

\begin{theorem}[Krull height theorem for polynomial rings {\cite[Cor. 11.17]{AtMac:CommutativeAlgebra}}]
\label{thm:Krullheight}
Let $X \subset \bbP^n$ be a variety, with $X = \Zero(F_1 \vvirg F_m)$. Then all irreducible components of $X$ have codimension at most $c$.
\end{theorem}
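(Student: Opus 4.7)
The plan is to translate this geometric statement into its commutative algebra counterpart and invoke the classical Krull height theorem for the Noetherian polynomial ring $R = \bbC[x_0, \ldots, x_n]$. Noetherianity follows from Hilbert's basis theorem. Note that the statement as written has a typographical $c$ in place of $m$; the intended conclusion is that every irreducible component of $\Zero(F_1,\ldots,F_m)$ has codimension at most $m$.

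First I would invoke the standard dictionary between irreducible subvarieties of $\bbP^n$ and homogeneous prime ideals of $R$ properly contained in the irrelevant ideal $\langle x_0,\ldots,x_n\rangle$: the irreducible components of $X = \Zero(F_1,\ldots,F_m)$ correspond bijectively to the minimal primes over the homogeneous ideal $I = \langle F_1,\ldots,F_m\rangle$. Under this correspondence, the projective codimension of a component in $\bbP^n$ equals the height of the associated prime in $R$ (since projectivization drops dimension by one on both sides). The claim thus reduces to the purely algebraic assertion that every minimal prime over an ideal generated by $m$ elements of a Noetherian ring has height at most $m$, which is precisely \cite[Cor.~11.17]{AtMac:CommutativeAlgebra}.

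Since this is a textbook result cited in the statement, I would simply invoke it. Were one to reprove it in situ, the substantive difficulty is the case $m=1$ (Krull's principal ideal theorem); the induction from $m-1$ to $m$ is then a clean prime-avoidance manipulation, applying the $m=1$ case inside a quotient of height at most $m-1$. The usual proof of the base case localizes at a minimal prime $\mathfrak{p}$ over $(f)$ so that $R_{\mathfrak{p}}/(f)$ is Artinian local, and then uses the descending chain condition on symbolic powers of any prime strictly below $\mathfrak{p}$, combined with Nakayama's lemma, to force $\mathrm{height}(\mathfrak{p}) \leq 1$. The main obstacle, in short, is the base case; everything else is a formal reduction.
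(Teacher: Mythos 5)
Your proposal is correct and matches the paper's treatment: the paper gives no independent proof, simply citing \cite[Cor.~11.17]{AtMac:CommutativeAlgebra}, and your reduction via the standard dictionary between irreducible components of $\Zero(F_1 \vvirg F_m)$ and minimal primes over $\langle F_1 \vvirg F_m\rangle$ (with projective codimension equal to height) is exactly the intended reading, including the observation that the ``$c$'' in the statement should be $m$.
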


\begin{theorem}[Bezout inequality, see~{\cite[Thm.~8.28]{DBLP:books/daglib/0090316}}]
\label{thm:Bezout}
If a projective variety $X \subset \bbP^n$ is cut out by $c$ polynomials of degrees $d_1, \dots, d_c$, the sum of degrees of its irreducible components is at most $\prod_{i = 1}^c d_i$.
\end{theorem}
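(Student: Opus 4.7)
The plan is to proceed by induction on $c$, the number of defining polynomials. The key geometric ingredient is the following variety--hypersurface version of Bezout, which I will call the \emph{auxiliary bound}: if $Y \subseteq \bbP^n$ is an irreducible projective variety and $F$ is a homogeneous polynomial of degree $d$ that does not vanish identically on $Y$, then the sum of the degrees of the irreducible components of $Y \cap \Zero(F)$ is at most $d \cdot \deg Y$.

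For the base case $c = 1$, the variety $X = \Zero(F_1)$ is a hypersurface. Factoring $F_1 = F_{1,1}^{a_1} \cdots F_{1,s}^{a_s}$ into pairwise non-associate irreducibles with each $a_j \geq 1$, the irreducible components of $X$ are the hypersurfaces $\Zero(F_{1,j})$, and the sum of their degrees satisfies $\sum_j \deg F_{1,j} \leq \sum_j a_j \deg F_{1,j} = \deg F_1 = d_1$.

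For the inductive step, set $Y = \Zero(F_1 \vvirg F_{c-1})$, so that $X = Y \cap \Zero(F_c)$, and let $Y = \bigcup_j Y_j$ be the decomposition into irreducible components. The inductive hypothesis gives $\sum_j \deg Y_j \leq \prod_{i < c} d_i$. Every irreducible component of $X$ is contained in $Y_j \cap \Zero(F_c)$ for some $j$, so it suffices to bound, for each $j$, the total degree of the irreducible components of $Y_j \cap \Zero(F_c)$. If $Y_j \subseteq \Zero(F_c)$, the intersection is $Y_j$ itself, contributing $\deg Y_j \leq d_c \cdot \deg Y_j$; otherwise the auxiliary bound yields the contribution $d_c \cdot \deg Y_j$. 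Summing over $j$ and invoking the inductive hypothesis gives the claimed bound $\prod_{i=1}^c d_i$.

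The main obstacle is the auxiliary bound itself. One clean route is to use intersection theory in the Chow ring $\CH(\bbP^n) \cong \bbZ[h]/(h^{n+1})$: an irreducible subvariety has class $[Y] = \deg Y \cdot h^{\codim Y}$, the class of a degree-$d$ hypersurface is $d \cdot h$, and positivity of proper intersections on projective space yields the inequality, with equality in the transverse case. A more elementary route takes a generic linear projection $Y \to \bbP^{\dim Y}$ of degree $\deg Y$ and pulls back generic hyperplanes, reducing the statement to counting points of a zero-dimensional intersection; such arguments are standard and can be found in~\cite[Ch.~18]{Harris:AlgGeo}. Either approach closes the induction.
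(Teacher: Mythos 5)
The paper never proves this statement: it is imported verbatim as a known result, with a pointer to \cite[Thm.~8.28]{DBLP:books/daglib/0090316}, so there is no internal proof to compare yours against. Judged on its own, your induction on $c$ is the standard textbook route and its skeleton is correct: the base case is fine, and the inductive step is fine once you make explicit the small missing observation that a component $W$ of $X$ contained in $Y_j \cap \Zero(F_c)$ is automatically an irreducible \emph{component} of $Y_j \cap \Zero(F_c)$ (since $Y_j \cap \Zero(F_c) \subseteq X$ and $W$ is maximal irreducible in $X$); with that, summing your per-component bounds over $j$ and invoking the inductive hypothesis does close the argument, including for non-equidimensional $X$.

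The one substantive caveat is that essentially all of the content lives in your auxiliary bound, which you do not prove but delegate. The Chow-ring route you sketch is correct and is the one to write out if you want a self-contained proof: if $F$ does not vanish identically on the irreducible variety $Y$, then by Krull's principal ideal theorem every component of $Y \cap \Zero(F)$ has codimension one in $Y$, the associated cycle represents $[Y]\cdot d\,h$ with all multiplicities at least $1$ (intersection with a Cartier divisor; see \cite[Ch.~2 and Ex.~8.4.6]{Fulton}), and pushing forward to $\bbP^n$ gives the bound $d\cdot\deg Y$. Your ``more elementary'' alternative via generic linear projections is vaguer than you suggest: the hyperplane-section arguments in \cite[Ch.~18]{Harris:AlgGeo} handle $d=1$, and for a hypersurface of degree $d>1$ you would need an extra device (e.g.\ a Veronese re-embedding) to reduce to that case, so as written it is a gesture rather than a proof. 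Since the paper itself treats the whole theorem as a black box, quoting refined Bezout for the auxiliary step is a reasonable level of rigour here, but you should present it as the key cited ingredient rather than as a routine remark.
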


Further, in the proof of \Cref{lem:subvarieties-induction}, we require some basics of intersection theory. We introduce the required notions and references in \Cref{sec: lower bound degree restricted strength}.

\subsection{Algebraic branching programs}

The computational model of algebraic branching programs was first formally defined by Nisan~\cite{DBLP:conf/stoc/Nisan91} in the context of noncommutative computation, but essentially the same model was used by Valiant in his famous proof of universality of determinant~\cite{DBLP:conf/stoc/Valiant79a}.
The computational power of algebraic branching programs is intermediate between the one of general arithmetic circuits and the one of arithmetic formulas.
It is a convenient model for algebraic methods, because its power can be captured by restrictions of determinants or iterated matrix multiplication polynomials, which allows for the use of well developed tools from algebra and algebraic geometry.
In this paper we only consider homogeneous algebraic branching programs.

\begin{definition}
A \emph{layered directed graph} is a directed graph in which the set of vertices is partitioned into \emph{layers} indexed by integers so that each edge connects vertices in consecutive layers.
\end{definition}

\begin{definition}
A \emph{homogeneous algebraic branching program (ABP)} in variables $x_1, \dots, x_n$ is a layered directed graph with one source and one sink, and with edges labeled by linear forms in $x_1, \dots, x_n$.
The \emph{weight} of a path in an ABP is the product of labels on the edges of the path. The \emph{polynomial computed between vertices $u$ and $v$} is the sum of the weights of all paths from $u$ to $v$.
The \emph{polynomial computed by an ABP} is the polynomial computed between the source and the sink.

The \emph{size} of an ABP is the number of its inner vertices, namely all vertices except the source and the sink. 
For a homogeneous polynomial $F$, its \emph{homogeneous ABP complexity} $\abphom(F)$ is the minimal size of a homogeneous ABP computing $F$.
\end{definition}

See Figure~\ref{fig:ABP} for an example of an ABP.

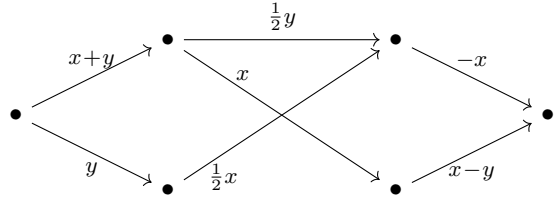
\begin{figure}
    \centering
\begin{tikzpicture}[xscale=2,yscale=1]
\node (0) at (0,0) {$\bullet$};
\node (1up) at (1,1) {$\bullet$};
\node (1down) at (1,-1) {$\bullet$};
\node (2up) at (2.5,1) {$\bullet$};
\node (2down) at (2.5,-1) {$\bullet$};
\node (3) at (3.5,0) {$\bullet$};
\draw[->] (0) to[above] node {\footnotesize $x\!+\!y$} (1up);
\draw[->] (0) to[below] node {\footnotesize $y$} (1down);
\draw[->] (1up) to[above] node {\footnotesize $\tfrac 1 2 y$} (2up);
\draw[->] (1up) to[pos=0.3, above] node {\footnotesize $x$} (2down);
\draw[->] (1down) to[pos=0.2, below] node {\footnotesize $\tfrac 1 2 x$} (2up);
\draw[->] (2up) to[above] node {\footnotesize $-x$} (3);
\draw[->] (2down) to[below] node {\footnotesize $x\!-\!y$} (3);
\end{tikzpicture}
    \caption{A homogeneous algebraic branching program of size 4, computing $(x+y) \cdot \frac 1 2 y \cdot (-x) + (x+y)\cdot x \cdot (x-y) + y\cdot \frac 1 2 x \cdot (-x) = x^3 - x^2 y - \frac{3}{2} x y^2$}
    \label{fig:ABP}
\end{figure}

\subsection{Strength and slice rank of polynomial}

\begin{definition}
Let $F$ be a homogeneous polynomial of degree $d$.
A \emph{strength decomposition} of $F$ is a decomposition of the form
\[
F = \sum_{k = 1}^r G_k H_k
\]
where $G_k$ and $H_k$ are homogeneous polynomials of degree less than $d$.
The strength of $F$ is 
\[
\str(F) = \min\{ r : F \text{ has a strength decomposition with $r$ summands}\}.
\]
\end{definition}

The following is a basic lower bound for the strength of a polynomial. It appears in the introduction of \cite{MR4066476} and \cite[Remark 4.3]{https://doi.org/10.48550/arxiv.2012.01237}; in~\cite{DBLP:journals/cc/Kumar19} it is mentioned with a reference to a personal communication with Saptharishi.

\begin{proposition}\label{prop:strength-codim-sing}
$\str(F) \geq \lceil \frac12 \codim \Sing(F) \rceil$.
\end{proposition}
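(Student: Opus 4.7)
The plan is to fix a strength decomposition $F = \sum_{k=1}^r G_k H_k$ realizing $r = \str(F)$ and to exhibit a subvariety of $\Sing(F)$ of controlled codimension, then apply Krull's height theorem (\Cref{thm:Krullheight}).

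First I would compute, by the product rule,
\[
\frac{\partial F}{\partial x_i} = \sum_{k=1}^{r} \left( \frac{\partial G_k}{\partial x_i} H_k + G_k \frac{\partial H_k}{\partial x_i} \right).
\]
The key observation is that every partial derivative of $F$ lies in the ideal $\langle G_1, H_1, \ldots, G_r, H_r \rangle$, so at any point $p$ where all $G_k$ and $H_k$ vanish simultaneously, every $\partial F / \partial x_i$ also vanishes. Writing $Z = \Zero(G_1, H_1, \ldots, G_r, H_r) \subseteq \bbP^n$, this gives the containment $Z \subseteq \Sing(F)$.

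Now I would split into two cases. If $2r \leq n$, then $Z$ is defined by $2r$ homogeneous polynomials in $\bbP^n$, so by \Cref{thm:Krullheight} every irreducible component of $Z$ has codimension at most $2r$; in particular $Z$ is nonempty and $\codim Z \leq 2r$. The containment $Z \subseteq \Sing(F)$ then yields $\codim \Sing(F) \leq \codim Z \leq 2r$, which rearranges to $r \geq \tfrac{1}{2}\codim \Sing(F)$, and since $r \in \bbZ$ we get $r \geq \lceil \tfrac{1}{2}\codim \Sing(F) \rceil$. If instead $2r > n$, then $r \geq \tfrac{n+1}{2} \geq \tfrac{1}{2}\codim \Sing(F)$ automatically, because by convention $\codim \Sing(F) \leq n+1$, so the bound follows trivially (and again we can round up since $r$ is an integer).

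There is no real obstacle here: the only subtlety is making sure that the set cut out by the $G_k$'s and $H_k$'s is nonempty before invoking the codimension bound, which is exactly why the case distinction on whether $2r \leq n$ is needed. Everything else is a direct application of Krull's height theorem together with the product rule.
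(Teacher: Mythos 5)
Your proposal is correct and follows essentially the same route as the paper: the product rule shows every $\partial F/\partial x_i$ lies in $\langle G_1, \dots, G_r, H_1, \dots, H_r\rangle$, so $\Zero(G_1, \dots, H_r) \subseteq \Sing(F)$, and \Cref{thm:Krullheight} bounds its codimension by $2r$. Your extra case distinction for $2r > n$ (where the common zero locus could be empty) is a harmless refinement that the paper's proof silently absorbs into the convention $\codim \emptyset = n+1$.
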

\begin{proof}
If $F = \sum_{k = 1}^r G_k H_k$, then $\frac{\partial}{\partial x_i} F = \sum_{k = 1}^r G_k \frac{\partial}{\partial x_i} H_k + \sum_{k = 1}^r H_k \frac{\partial}{\partial x_i} G_k$.
Thus all the partial derivatives of $F$ lie in the ideal $\left< G_1, \dots, G_r, H_1, \dots, H_r\right>$ and therefore the zero set $\Zero(G_1, \dots, G_r, H_1, \dots, H_r)$ is contained in $\Sing(F)$.

Therefore, applying \Cref{thm:Krullheight}, we deduce
\[
\codim \Sing(F) \leq \codim \Zero(G_1, \dots, G_r, H_1, \dots, H_r) \leq 2r
\]
and the required lower bound follows.
\end{proof}

\begin{remark}
The bound of \Cref{prop:strength-codim-sing} gives essentially the only known lower bound method for strength which can be applied to explicit polynomials. Different methods are applied to polynomials of a specific form satisfying an unspecified genericity condition: for instance, in \cite{https://doi.org/10.48550/arxiv.2012.01237}, it is shown that a polynomial of the form $F = x_1^2 f_1 + x_2^2 f_2 + x_3^2 f_3 + x_4^2 f_4$ with generic $f_1 \vvirg f_4$ has strength $4$; this is however achieved using indirect methods 
\cite{Dra:TopNoetPolyFunct,Bik:Thesis}.
\end{remark}

\begin{definition}
Let $F$ be a homogeneous polynomial of degree $d$.
A \emph{slice rank decomposition} of $F$ is a decomposition of the form
\[
F = \sum_{k = 1}^r L_k H_k
\]
where $L_k$ are linear forms.
The minimal number of summands in a strength decomposition of $F$ is called the \emph{slice rank} of $F$ and is denoted by $\sr(F)$.
\end{definition}

We point out that the notion of slice rank of tensors \cite{SawTao:NoteSliceRank} is related but geometrically very different from the slice rank of homogeneous polynomials defined above. 

Clearly, slice rank decompositions are a special class of strength decompositions and thus $\sr(F) \geq \str(F)$. It is known that for generic polynomials the optimal strength decomposition is a slice rank decomposition \cite{https://doi.org/10.48550/arxiv.2102.11549}; in particular $\str(F) = \sr(F)$ for generic $F$.

Slice rank decompositions of a polynomial $F$ have a clear geometric interpretation in terms of linear subspaces contained in the hypersurface $\Zero(F)$.

\begin{proposition}\label{lem:sr-subspace}
Let $F$ be a homogeneous polynomial. We have $\sr(F) \leq r$ if and only if $\Zero(F)$ contains a linear subspace of codimension $r$.
\end{proposition}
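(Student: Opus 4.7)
The plan is to prove both implications by relating slice rank decompositions directly to the defining equations of linear subspaces contained in $\Zero(F)$.

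For the easy direction ($\Rightarrow$), assume $\sr(F) \le r$, so $F = \sum_{k=1}^{r} L_k H_k$ with $L_k$ linear forms. I would first reduce to the case that $L_1,\ldots,L_r$ are linearly independent by an absorption argument: if $L_1 = \sum_{k \ge 2} a_k L_k$, then $F = \sum_{k \ge 2} L_k(H_k + a_k H_1)$, a decomposition with fewer summands, and one can iterate. After this reduction we have $F = \sum_{k=1}^{r'} L_k H_k$ with $r' \le r$ and $L_1 \vvirg L_{r'}$ linearly independent, so $\Lambda := \Zero(L_1 \vvirg L_{r'}) \subseteq \Zero(F)$ is a linear subspace of codimension exactly $r'$. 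If $r' < r$, I would intersect $\Lambda$ with $r - r'$ additional generic hyperplanes to produce a linear subspace of codimension $r$ still contained in $\Zero(F)$.

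For the converse ($\Leftarrow$), suppose $\Zero(F)$ contains a linear subspace $\Lambda$ of codimension $r$. By the description of linear subspaces recorded in Section~\ref{subsec:projvar}, we have $\Lambda = \Zero(L_1 \vvirg L_r)$ for some linearly independent linear forms $L_1 \vvirg L_r$. The main step is to conclude that $F$ lies in the ideal $\langle L_1 \vvirg L_r \rangle$. I would do this by a linear change of coordinates: complete $L_1 \vvirg L_r$ to a basis of the linear forms, so that in the new coordinates we may assume $L_i = x_i$ for $i = 1 \vvirg r$. Writing $F$ in these coordinates and grouping monomials by whether they involve any of $x_1 \vvirg x_r$, the condition $F|_\Lambda = 0$ says precisely that no monomial of $F$ is supported solely on $x_{r+1} \vvirg x_n$. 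Hence $F = \sum_{k=1}^r x_k H_k$ for some polynomials $H_k$, which are homogeneous of degree $d-1$ (since they can be taken to be the corresponding graded pieces). Reverting the change of coordinates yields $F = \sum_{k=1}^{r} L_k H_k$, i.e.\ $\sr(F) \le r$.

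The main conceptual point, and the one I would state carefully, is the identification between a linear subspace $\Lambda$ of codimension $r$ and a homogeneous ideal generated by $r$ linearly independent linear forms, together with the fact that a homogeneous polynomial vanishing on $\Lambda$ belongs to that ideal. This is the only nontrivial input; once it is invoked, everything else reduces to the linear-algebra absorption argument in one direction and the coordinate-change observation in the other.
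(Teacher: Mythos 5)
Your proof is correct, but it takes a somewhat different route from the paper on the key step. The paper reduces both directions to the single equivalence ``$F \in \langle L_1 \vvirg L_r\rangle$ iff $\Zero(F) \supseteq \Zero(L_1 \vvirg L_r)$'' and justifies it by quoting that an ideal generated by linear forms is radical together with the Nullstellensatz; you instead prove the needed ideal-membership statement by hand, via a linear change of coordinates making $L_i = x_i$ and the observation that a homogeneous polynomial vanishing identically on the coordinate subspace $\{x_1 = \cdots = x_r = 0\}$ has every monomial divisible by some $x_i$ with $i \le r$. This is more elementary and self-contained (it is exactly the special case of the Nullstellensatz that is actually needed, and it makes the homogeneity of the $H_k$ explicit), at the cost of being longer; the paper's citation is shorter but leans on more machinery. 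You are also more careful than the paper about the codimension bookkeeping: your absorption argument to make the $L_k$ linearly independent, and the subsequent cutting by extra hyperplanes to pass from codimension $r' \le r$ to codimension exactly $r$, address a point the paper's one-line proof glosses over (since $\Zero(L_1 \vvirg L_r)$ a priori only has codimension at most $r$). Both arguments are fine; yours trades a black-box citation for explicit linear algebra.
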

\begin{proof}
The polynomial $F$ admits a decomposition $F = \sum_{k = 1}^r L_k H_k$ for linear forms $L_1 \vvirg L_k$ if and only if $F \in \langle L_1 \vvirg L_k \rangle$. The ideal $\langle L_1 \vvirg L_k \rangle$ is radical, in the sense of \cite[Sec. 1.6]{Eis:CommutativeAlgebra}. Therefore, by the classic Nullstellensatz \cite[Thm. 1.6]{Eis:CommutativeAlgebra}, the condition $F \in \langle L_1 \vvirg L_k \rangle$ is equivalent to the condition $\Zero(F) \supset \Zero(L_1, \dots, L_r)$.
\end{proof}

\section{Degree-restricted strength decompositions and lower bounds on ABP size}
\label{sec:degresstrdec}

\subsection{Basic properties and connection to ABPs}

In this section we introduce the degree-restricted strength decompositions and present a streamlined proof of Kumar's lower bound generalized to arbitrary polynomials based on \Cref{prop:strength-codim-sing}.

\begin{definition}
Let $F$ be a homogeneous polynomial of degree $d$.
A strength decomposition $F = \sum_{k = 1}^r G_k H_k$ is called \emph{$j$-restricted} if $\deg G_k = j$ for all $k$.
The \emph{$j$-restricted strength} $\str_j(F)$ is the minimal number of summands in a $j$-restricted strength decomposition of $F$.
\end{definition}

The following basic properties are clear from the definition.

\begin{proposition}\label{prop:restricted-strength}
Let $F$ be a homogeneous polynomial of degree $d$ and let $j$ be an integer such that $1 \leq j < d$.
The following statements hold.
\begin{alphaenumerate}
\item $\str_j(F) \geq \str(F)$;
\item $\str_j(F) = \str_{d - j}(f)$;
\item $\str_1(F) = \sr(F)$;
\end{alphaenumerate}
\end{proposition}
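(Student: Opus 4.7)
The plan is to observe that each of the three claims follows almost directly from the definitions, so the proof reduces to unpacking degrees and rearranging summands; none of the three requires any substantive tool beyond the homogeneity of $F$.

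For (a), my approach is to note that the collection of $j$-restricted strength decompositions of $F$ is, by definition, a subset of the collection of all strength decompositions of $F$ (the only extra requirement is that $\deg G_k = j$ for every $k$). Taking the minimum number of summands over a subset can only yield a value $\geq$ the minimum over the larger set, which gives $\str_j(F) \geq \str(F)$.

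For (b), I would exploit the symmetry of the product $G_k H_k$. Since $F$ is homogeneous of degree $d$ and each summand $G_k H_k$ must be homogeneous of degree $d$, the constraint $\deg G_k = j$ forces $\deg H_k = d-j$, and both degrees are strictly smaller than $d$ precisely when $1 \leq j \leq d-1$. Given a $j$-restricted strength decomposition $F = \sum_{k=1}^{r} G_k H_k$, the rewriting $F = \sum_{k=1}^{r} H_k G_k$ is a $(d-j)$-restricted decomposition with the same number of summands; applying this operation in both directions gives $\str_j(F) = \str_{d-j}(F)$.

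For (c), I would compare definitions directly. A $1$-restricted strength decomposition is of the form $F = \sum_{k=1}^{r} G_k H_k$ with $\deg G_k = 1$, i.e., each $G_k$ is a (nonzero) linear form. This coincides exactly with the notion of slice rank decomposition from the definition preceding the proposition, where we write $F = \sum_{k=1}^{r} L_k H_k$ with linear $L_k$. Hence the minima agree, $\str_1(F) = \sr(F)$.

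The only very mild subtlety worth flagging is the standing assumption $1 \leq j < d$ in the statement, which ensures that the homogeneous factors $G_k, H_k$ in a $j$-restricted decomposition genuinely have degree strictly less than $d$, as required by the definition of strength decomposition; without this hypothesis the equality in (b) would be vacuous at the boundary. No step here is expected to pose an obstacle.
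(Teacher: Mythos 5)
Your proof is correct and is exactly the intended argument: the paper gives no explicit proof, stating only that these properties are ``clear from the definition,'' and your unpacking (restricting the class of decompositions for (a), swapping the factors $G_k \leftrightarrow H_k$ for (b), and identifying $1$-restricted decompositions with slice rank decompositions for (c)) is precisely that reasoning.
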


\begin{theorem}\label{thm:abp-sum}
For every homogeneous polynomial $F$ of degree $d$
\[
\abphom(F) \geq \sum_{j = 1}^{d - 1} \str_j(F).
\]
\end{theorem}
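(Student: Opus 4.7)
The plan is to exploit the layer structure of a homogeneous ABP and observe that cutting at any single intermediate layer yields a strength decomposition of a specific degree-restriction type. The bound then follows by summing the layer sizes.

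First, I would recall that in a homogeneous ABP computing a polynomial of degree $d$, every source-to-sink path has length exactly $d$, so (after removing unreachable vertices, which only decreases the size) the vertices partition into layers $0, 1, \ldots, d$, where the source is the unique vertex at layer $0$ and the sink is the unique vertex at layer $d$. The number of inner vertices equals $\sum_{j = 1}^{d - 1} \ell_j$, where $\ell_j$ denotes the number of vertices in layer $j$.

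Next, fix an index $j$ with $1 \leq j \leq d - 1$. For each vertex $v$ in layer $j$, let $G_v$ be the polynomial computed between the source and $v$, and let $H_v$ be the polynomial computed between $v$ and the sink. Because all edge labels are linear forms and every source-to-$v$ path has length $j$, the polynomial $G_v$ is homogeneous of degree $j$; similarly $H_v$ is homogeneous of degree $d - j$. Since every source-to-sink path passes through exactly one vertex of layer $j$, grouping the paths according to which such vertex they traverse gives
\[
F = \sum_{v \text{ in layer } j} G_v H_v.
\]
This is a $j$-restricted strength decomposition of $F$ (dropping any summand in which $G_v$ or $H_v$ happens to be zero only decreases the number of terms). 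Therefore $\str_j(F) \leq \ell_j$.

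Finally, summing the inequality $\str_j(F) \leq \ell_j$ over $j = 1, \ldots, d - 1$ yields
\[
\sum_{j = 1}^{d - 1} \str_j(F) \leq \sum_{j = 1}^{d - 1} \ell_j = \text{size of the ABP},
\]
and since this holds for every homogeneous ABP computing $F$, the conclusion follows by taking an ABP of minimal size. There is no real obstacle here: the argument is essentially Nisan's layer-cut observation, re-read through the lens of the degree-restricted strength.
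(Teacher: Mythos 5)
Your proposal is correct and follows essentially the same argument as the paper: cut the ABP at layer $j$, note that the source-to-vertex polynomials are homogeneous of degree $j$, obtain a $j$-restricted strength decomposition with as many summands as vertices in that layer, and sum over the layers. The minor bookkeeping you add (discarding unreachable vertices and zero summands) is harmless and does not change the argument.
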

\begin{proof}
Let $\mathsf{A}$ be a homogeneous ABP computing $F$ with source $s$ and sink $t$.
Denote by $\mathsf{A}[v, w]$ the polynomial computed between vertices $v$ and $w$.
Let $V_j$ be the set of vertices in the $j$-th layer.
Since each path from the source to the sink contains exactly one vertex from each layer, we have
\[
F = \mathsf{A}[s, t] = \sum_{v \in V_j} \mathsf{A}[s, v] \mathsf{A}[v, t].
\]
If $v$ lies in the $j$-th layer then $\deg \mathsf{A}[s, v] = j$, because each path from $s$ to $v$ contains $j$ edges.
Thus $F$ has a $j$-restricted strength decomposition with $|V_j|$ summands, showing $|V_j| \geq \str_j(F)$.
Summing over all layers, we obtain the desired lower bound.
\end{proof}

From \Cref{thm:abp-sum}, \Cref{prop:restricted-strength}(a) and \Cref{prop:strength-codim-sing}, we immediately obtain the following $\abphom$ lower bounds technique.

\begin{corollary}[Kumar's singular locus lower bounds technique]\label{cor:bhombound}
For every homogeneous polynomial $F \in \bbF[x_0, \dots, x_n]$ of degree $d$
\[
\abphom(F)
\geq (d-1) \lceil \tfrac12 \codim \Sing(F) \rceil.
\]
In particular, if $\Sing(F) = \varnothing$, then 
\[
\abphom(F) \geq (d - 1) \lceil \tfrac{n+1}{2} \rceil.
\]
\end{corollary}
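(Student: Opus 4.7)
The plan is to assemble the corollary by composing the three results that immediately precede it. The chain is:
\begin{equation*}
\abphom(F) \;\stackrel{(1)}{\geq}\; \sum_{j=1}^{d-1} \str_j(F) \;\stackrel{(2)}{\geq}\; \sum_{j=1}^{d-1} \str(F) \;\stackrel{(3)}{\geq}\; \sum_{j=1}^{d-1} \lceil \tfrac12 \codim \Sing(F) \rceil \;=\; (d-1) \lceil \tfrac12 \codim \Sing(F) \rceil,
\end{equation*}
where (1) is \Cref{thm:abp-sum}, (2) is \Cref{prop:restricted-strength}(a) applied summand by summand for each $j \in \{1,\dots,d-1\}$, and (3) is \Cref{prop:strength-codim-sing}. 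The last equality is because the bound on the right does not depend on $j$.

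For the second assertion, I would invoke the convention recorded in \Cref{subsec:projvar} that $\codim \varnothing = n+1$ in $\bbP^n$. Substituting this into the bound just proved gives $\abphom(F) \geq (d-1)\lceil (n+1)/2 \rceil$, as claimed.

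There is no real obstacle here; the statement is a direct combination of the three preceding results, and the only subtlety worth flagging explicitly is the convention on the codimension of the empty variety, which is what lets the singular-free case be stated cleanly in terms of the ambient dimension.
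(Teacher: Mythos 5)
Your chain is exactly the argument the paper intends: \Cref{cor:bhombound} is stated as an immediate consequence of \Cref{thm:abp-sum}, \Cref{prop:restricted-strength}(a), and \Cref{prop:strength-codim-sing}, and your handling of the $\Sing(F)=\varnothing$ case via the convention $\codim\varnothing = n+1$ from \Cref{subsec:projvar} is the right (and only) extra observation needed. Correct, and the same route as the paper.
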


The power sum $F=x_0^d+\cdots+x_n^d$ of degree $d$ in $n+1$ variables satisfies $\Sing(F)=\varnothing$ and hence a direct application of \Cref{cor:bhombound} gives $\abphom(x_0^d+\cdots+x_n^d) \geq (d - 1) \lceil \frac{n+1}{2} \rceil$, which recovers Kumar's result \cite{DBLP:journals/cc/Kumar19}.

\subsection{Lower bound on degree-restricted strength}\label{sec: lower bound degree restricted strength}

In this section we prove a lower bound on the degree-restricted strength of polynomials.
It is based on the connection between strength decompositions and low degree subvarieties in $\Zero(F)$, which generalizes \Cref{lem:sr-subspace}.
We provide an explicit sequence of polynomials for which we obtain a lower bound that is slightly stronger than the one from \Cref{prop:strength-codim-sing}.

\begin{theorem}\label{thm:restricted-lowerbound}
Let $F \in \bbF[x_0, \dots, x_n]_d$ be a homogeneous polynomial.
Suppose that the zero set $\Zero(F)$ does not contain irreducible subvarieties $X$ with $\codim X \leq c$ and $\deg X < s$ for some $s \geq 2$.
Then
\[
\sr(F) \geq c + 1
\]
and
\[
\str_k(F) \geq \min\{c + 1, \lceil \log_k s \rceil\}
\]
\end{theorem}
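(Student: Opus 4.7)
The plan is to extract both bounds from the geometry of the ideal generated by the factors in a strength decomposition.

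For the slice-rank bound, I would apply \Cref{lem:sr-subspace}, which equates $\sr(F) \leq r$ with the existence of a codimension-$r$ linear subspace inside $\Zero(F)$. Any linear subspace is irreducible of degree $1$, which is strictly smaller than $s$ since $s \geq 2$. Consequently, if $\sr(F) \leq c$ then $\Zero(F)$ would contain an irreducible subvariety of codimension at most $c$ and degree $1 < s$, violating the hypothesis. Hence $\sr(F) \geq c+1$.

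For the degree-restricted bound, I would take a minimal $k$-restricted decomposition $F = \sum_{i=1}^r G_i H_i$ with $r = \str_k(F)$ and study $Y = \Zero(G_1, \dots, G_r)$. The inclusion $F \in \langle G_1, \dots, G_r\rangle$ gives $Y \subseteq \Zero(F)$. Krull's height theorem (\Cref{thm:Krullheight}) bounds the codimension of each irreducible component of $Y$ by $r$, and Bezout's inequality (\Cref{thm:Bezout}) bounds the sum of their degrees by $k^r$, since each $G_i$ has degree $k$.

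I would then split on whether $r \leq c$. If $r \geq c+1$, the conclusion is immediate. Otherwise pick any irreducible component $X$ of $Y$: by Krull it has codimension at most $r \leq c$, and since $X \subseteq \Zero(F)$, the hypothesis forces $\deg X \geq s$. Bezout then gives $s \leq \deg X \leq k^r$, which rearranges to $r \geq \lceil \log_k s \rceil$.

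The only delicate point is the nonemptiness of $Y$ in the case $r \leq c$, which is needed so that an irreducible component actually exists to apply the hypothesis to. This follows from the standard projective-space fact that any $r \leq n$ homogeneous polynomials in $n+1$ variables share a common zero in $\bbP^n$, and the regime $r \leq c \leq n$ is implicit in the hypothesis (for $c$ exceeding $n$ the conclusion would degenerate). Modulo this routine check, the proof reduces to a clean combination of Krull's height theorem with Bezout's inequality, so I do not anticipate any serious obstacle.
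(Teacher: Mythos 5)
Your proposal is correct and follows essentially the same route as the paper's proof: \Cref{lem:sr-subspace} for the slice-rank statement, and Krull's height theorem (\Cref{thm:Krullheight}) combined with the Bezout inequality (\Cref{thm:Bezout}) applied to $Y = \Zero(G_1,\dots,G_r)$ for the $k$-restricted bound, with the same case split on $r \leq c$. Your extra check that $Y \neq \varnothing$ when $r \leq c$ is a point the paper leaves implicit, and your justification of it is sound, since the hypothesis is only satisfiable for $c < n$ once $\Zero(F) \neq \varnothing$.
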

\begin{proof}
The statement for the slice rank follows from~\Cref{lem:sr-subspace} as $\Zero(F)$ does not contain linear subspaces, that is, irreducible subvarieties of degree $1$,  of codimension $c$.

Assume $F$ has a $k$-restricted decomposition
\[
F = \sum_{j = 1}^r G_j H_j.
\]
Consider the variety $Y = \Zero(G_1, \dots, G_r)$. Since $F$ lies in the ideal $\langle G_1, \dots, G_r \rangle$, $Y$ is a subvariety of $\Zero(F)$. Since $Y$ is defined by $r$ polynomials of degree $k$, by \Cref{thm:Krullheight}, the codimension of every irreducible component of $Y$ is at most $r$; moreover, by \Cref{thm:Bezout} the sum of degrees of its irreducible components is at most $k^r$, hence the same holds for each component.

Suppose $r \leq c$. Since $\Zero(F)$ does not contain subvarieties $X$ with $\codim X \leq c$ and $\deg X < s$, for each irreducible component $X$ of $Y$ we have $k^r \geq \deg X \geq s$, so $r \geq \log_k s$. Since $r$ is an integer, we obtain the lower bound $r \geq \lceil \log_k s \rceil$. Hence, either $r \geq c+1$ or $r \geq\lceil \log_k s \rceil$, and we conclude $r \geq \min\{ c + 1, \lceil \log_k s \rceil\}$ as desired.
\end{proof}

\begin{remark}
In \Cref{thm:restricted-lowerbound} it suffices to require that $\Zero(F)$ does not contain subvarieties of codimension exactly $c$ and degree smaller than $s$. This is a consequence of Bertini's Theorem, see \cite[Sec.~18]{Harris:AlgGeo}. Indeed, if $X$ is an irreducible subvariety of $\Zero(F)$ with $\codim X < c$, let $X'$ be the intersection of $X$ with $c - \codim X$ generic hyperplanes; then $\codim X' = c$ and $\deg X = \deg X'$.
\end{remark}

We apply the lower bound of \Cref{thm:restricted-lowerbound} to the family of polynomials
\[
P_{n, d}(x_0, x_1, \dots, x_{2n}) = x_0^d + \sum_{k = 1}^n x_{2k - 1} x_{2k}^{d-1}
\]
with $d \geq 3$.
Note that it is clear from the definition of $P_{n, d}$ that $\str_k(P_{n, d}) \leq n + 1$.
Also note that $\Sing(P_{n, d})$ is the linear subspace given by $x_0 = x_2 = x_4 = \dots = x_{2n} = 0$.
Its codimension is $n + 1$, so the singular locus lower bound on $\abphom(P_{n,d})$ from \Cref{cor:bhombound} is
\begin{equation*}\label{eq:badbound}
\abphom(P_{n,d}) \geq (d-1)\left\lceil \tfrac{n + 1}{2} \right\rceil.
\end{equation*}

\Cref{cor:ourlowerbounds}(c) will provide an improvement of this lower bound, based on \Cref{thm:restricted-lowerbound}. In order to apply \Cref{thm:restricted-lowerbound}, we need a lower bound on the degree of subvarieties of $\Zero(P_{n,d})$ of low codimension. This is obtained resorting to an intersection theoretic argument, which is explained in the next section.

\subsection[Intersection theory of Z(P\_n,d)]{Intersection theory of $\Zero(P_{n,d})$}

This section requires some background in intersection theory, for which we refer to \cite{EisHar:3264} and \cite{Fulton}.
Given a variety $X$, denote by $\CH_a(X)$ the Chow group of algebraic cycle classes of dimension $a$ in $X$.
Elements of $\CH_a(X)$ are integer linear combinations of irreducible dimension $a$ subvarieties of $X$ modulo an equivalence relation called \emph{rational equivalence}.

If $X \subset Y$, then every subvariety of $X$ is also a subvariety of $Y$. This can be used to define a natural map $\iota_* : \CH_a(X) \to \CH_{a}(Y)$, the pushforward induced by the inclusion $X \subset Y$.
In particular, there is a well defined degree map $\deg : \CH_a(X) \to \bbZ$ sending an algebraic cycle class $Z \in \CH_a(X)$ to the degree of the class $\iota_*(Z) \in \CH_a(\bbP^n)$.

\begin{lemma}\label{lem:subvarieties-induction}
Let $F \in \bbF[x_0, \dots, x_n]_d$ be a homogeneous polynomial of degree $d \geq 2$. 
Set $G = F + x_{n + 1} x_{n + 2}^{d-1}$. Suppose that the degree of every subvariety $Y \subseteq \Zero(F) \subseteq \bbP^n$ with $\dim Y = a$ is divisible by $s$. Then the degree of every subvariety $Y' \subseteq \Zero(G) \subseteq \bbP^{n+2}$ with $\dim Y' = a + 1$ is divisible by $s$.
\end{lemma}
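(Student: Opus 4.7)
The plan is to apply Chow-group excision to $W = \Zero(G)$, together with the explicit description of the hyperplane section $W \cap \{x_{n+2} = 0\}$ as a projective cone, to rewrite every dim-$(a+1)$ cycle class in $W$ as a combination of projective cones over dim-$a$ subvarieties of $\Zero(F)$.

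First, I would decompose $W$. Setting $H = \{x_{n+2} = 0\} \subseteq \bbP^{n+2}$, since $G|_H = F$ does not involve $x_{n+1}$, the slice $W \cap H$ is the projective cone over $\Zero(F) \subseteq \bbP^n$ with vertex $v = [0:\ldots:0:1:0]$, sitting inside $H \cong \bbP^{n+1}$. On the complement, setting $x_{n+2} = 1$ reduces $G = 0$ to $x_{n+1} = -F(x_0 \vvirg x_n)$, so $W \setminus (W \cap H) \cong \bbA^{n+1}$. The Chow-group excision sequence $\CH_{a+1}(W \cap H) \to \CH_{a+1}(W) \to \CH_{a+1}(\bbA^{n+1}) \to 0$, combined with $\CH_k(\bbA^{n+1}) = 0$ for $k < n+1$, gives a surjection $\CH_{a+1}(W \cap H) \to \CH_{a+1}(W)$ in the relevant range $a+1 \leq n$, which is automatic since $\dim \Zero(F) = n - 1$ forces $a \leq n - 1$.

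Second, I would compute $\CH_{a+1}(W \cap H)$ via the cone structure. The projection from $v$ realizes $(W \cap H) \setminus \{v\}$ as a Zariski-locally trivial line bundle over $\Zero(F)$ with fibre $\bbA^1$, with trivializations over the standard affine charts of $\bbP^n$. Combining homotopy invariance of Chow groups for affine bundles with excision of the vertex yields a canonical isomorphism $\CH_a(\Zero(F)) \cong \CH_{a+1}(W \cap H)$: by flat pullback and closure, the class of a dim-$a$ subvariety $W' \subseteq \Zero(F)$ corresponds to the class of the projective cone $\mathrm{Cone}(W')$ over $W'$ with vertex $v$, and crucially $\deg \mathrm{Cone}(W') = \deg W'$ because projective cones have the same projective degree as their base.

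Stringing these together, for any subvariety $Y' \subseteq W$ of dimension $a+1$, there exist irreducible $W'_i \subseteq \Zero(F)$ of dim $a$ and integers $n_i$ such that $[Y'] = \sum_i n_i [\mathrm{Cone}(W'_i)]$ in $\CH_{a+1}(W)$. Since degree as a cycle in $\bbP^{n+2}$ is preserved by closed-immersion pushforward and by rational equivalence, $\deg Y' = \sum_i n_i \deg W'_i$, which is divisible by $s$ by hypothesis. The main technical obstacle is the Chow computation itself: verifying that $(W \cap H) \setminus \{v\}$ is an algebraic line bundle (so that Fulton's homotopy invariance applies even when $\Zero(F)$ is singular or reducible), and that the bundle-and-excision isomorphisms really compose to send a subvariety class to its cone class. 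Once this is in place, the divisibility follows by routine bookkeeping of degrees.
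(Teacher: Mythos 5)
Your proposal is correct and follows essentially the same route as the paper's proof: decompose $\Zero(G)$ into the hyperplane section $\{x_{n+2}=0\}$ (the projective cone over $\Zero(F)$) and its affine complement isomorphic to $\bbA^{n+1}$, use the excision sequence to get surjectivity onto $\CH_{a+1}(\Zero(G))$, and transport degrees through the degree-preserving cone isomorphism $\CH_a(\Zero(F)) \cong \CH_{a+1}(\mathrm{cone})$. The only difference is that you sketch a proof of the cone isomorphism via the line-bundle structure on the cone minus its vertex, whereas the paper simply cites \cite[Ex.~2.6.2]{Fulton}; your worry about singular or reducible $\Zero(F)$ is harmless, since homotopy invariance of Chow groups holds for affine bundles over arbitrary schemes.
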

\begin{proof}
Let $X = \Zero(G) \subset \bbP^{n + 2}$ and $Z = \Zero(F) \subset \bbP^n$.
Let $Y \subset \bbP^{n + 1}$ be the variety given by the equation $F(x_0, \dots, x_n) = 0$ in $\bbP^{n + 1}$.
It consists of all points of the form $[\alpha x + \beta e_{n + 1}]$ with $F(x) = 0$, so it is the projective cone over $Z$ with the vertex $[e_{n + 1}]$.

It is known~\cite[Ex.~2.6.2]{Fulton} that the Chow group $\CH_{a + 1}(Y)$ is isomorphic to the Chow group $\CH_a(Z)$.
The isomorphism $\alpha \colon \CH_a(Z) \to \CH_{a + 1}(Y)$ takes the class of a subvariety in $Z$ to the class of the cone over this subvariety.

Let $H$ be the hyperplane given by $x_{n + 2} = 0$ in $\bbP^{n + 2}$.
Note that $X \cap H \subset H$ is isomorphic to $Y \subset \bbP^{n + 1}$, so we can identify $X \cap H$ with~$Y$.
Let $U$ be the open subset $X \setminus Y$.
This subset is an affine variety in $\bbA^{n + 2} = \bbP^{n + 2} \setminus H$ given by the equation $F + x_{n + 1} = 0$.
Thus $U \cong \graph F \cong \bbA^{n + 1}$. It follows that the Chow group $\CH_{a + 1}(U)$ is trivial~\cite[\S 1.9]{Fulton}.

The exactness of the excision exact sequence~\cite[Prop.~1.8]{Fulton}
\[
\CH_{a + 1}(Y) \to \CH_{a + 1}(X) \to \CH_{a + 1}(U) \to 0
\]
implies that the inclusion pushforward $\iota^* \colon \CH_{a + 1}(Y) \to \CH_{a + 1}(X)$ is surjective.

Both the cone map $\alpha \colon \CH_a(Z) \to \CH_{a + 1}(Y)$ and $\iota^* \colon \CH_{a + 1}(Y) \to \CH_{a + 1}(X)$ preserve the degree.
Composing them, we obtain a degree-preserving surjective map from $\CH_a(Z)$ to $\CH_{a + 1}(X)$.
Since the degree of every dimension $a$ subvariety of $Z$ is divisible by $s$, the same is true for cycle classes in $\CH_a(Z)$ and therefore, for cycle classes in $\CH_{a + 1}(X)$, including dimension $a + 1$ subvarieties of $X$.
\end{proof}

\begin{corollary}\label{cor:ourlowerbounds}
If $n \geq 1$ and $d \geq 2$, then 

\begin{alphaenumerate}
\item $\sr(P_{n, d}) = n + 1$;
\item $\str_k(P_{n, d}) \geq \min \{n + 1, \lceil \log_k d \rceil\}$;
\item $\abphom(P_{n, d}) \geq (d - 1) \lceil \frac{n + 1}{2} \rceil + 2 \lfloor \frac{n + 1}{2} \rfloor$ for $d \leq 2^{\frac{n + 1}{2}}$;\\
$\abphom(P_{n, d}) \geq (d - 1) \lceil \frac{n + 1}{2} \rceil
+ 2 \lfloor \frac{n + 1}{2} \rfloor \lfloor d^{\frac{1}{n + 1}} \rfloor + \sum_{j = \lfloor d^{\frac{1}{n + 1}} \rfloor + 1}^{\lfloor d^{\frac{2}{n + 1}} \rfloor} (\lceil \log_j d \rceil - \lceil \frac{n + 1}{2} \rceil)$ for all $d$.
\end{alphaenumerate}

\end{corollary}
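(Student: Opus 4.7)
The key geometric ingredient for (a) and (b) is the claim, proved by induction on $n \geq 1$: every irreducible subvariety of $\Zero(P_{n,d}) \subset \bbP^{2n}$ of dimension at least $n$ has degree divisible by $d$. Given this claim, \Cref{thm:restricted-lowerbound} applied with $c = n$ and $s = d$ yields $\sr(P_{n,d}) \geq n+1$ and $\str_k(P_{n,d}) \geq \min\{n+1, \lceil \log_k d \rceil\}$. Combined with the obvious upper bound $\sr(P_{n,d}) \leq n+1$ coming from the defining decomposition $P_{n,d} = x_0 \cdot x_0^{d-1} + \sum_{k=1}^{n} x_{2k-1} \cdot x_{2k}^{d-1}$, this settles (a) and (b).

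For the inductive step, with $n \geq 2$, I would apply \Cref{lem:subvarieties-induction} to $F = P_{n-1,d}$ and $G = P_{n,d} = F + x_{2n-1}x_{2n}^{d-1}$ for each $a \in \{n-1, \ldots, 2n-3\}$, transferring divisibility by $d$ from $\Zero(P_{n-1,d})$ to $\Zero(P_{n,d})$ in all dimensions $\{n, \ldots, 2n-2\}$. The top dimension $2n-1$ lies outside the reach of the lemma (whose Chow-group excision argument is effective only when the cycle dimension is strictly less than that of the affine chart $U \cong \bbA^{n+1}$), so I would handle it separately by establishing that $P_{n,d}$ is irreducible, making $\Zero(P_{n,d})$ itself the only irreducible subvariety of that top dimension, of degree $d$. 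Irreducibility follows from a dimensional analysis of $\Sing(P_{n,d}) = \{x_0 = x_2 = \cdots = x_{2n} = 0\}$, of codimension $n+1$: a coprime factorization $P_{n,d} = fg$ would place $\Zero(f) \cap \Zero(g) \subseteq \Sing(P_{n,d})$, which by \Cref{thm:Krullheight} has dimension at least $2n-2$, clashing with $\dim \Sing(P_{n,d}) = n-1$ whenever $n \geq 2$; a power factorization $f^k$ with $k \geq 2$ would place all of $\Zero(f)$ inside $\Sing(P_{n,d})$, an even stronger contradiction. For the base case $n = 1$, Eisenstein's criterion at the prime $x_1 \in \bbC[x_1, x_2]$, applied to $P_{1,d} = x_0^d + x_1 x_2^{d-1}$ viewed as a polynomial in $x_0$, gives irreducibility directly, so $\Zero(P_{1,d}) \subset \bbP^2$ is an irreducible plane curve of degree $d$ and the claim is immediate.

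For part (c), I would combine three lower bounds on $\str_k(P_{n,d})$: the baseline $\lceil (n+1)/2 \rceil$ from \Cref{prop:strength-codim-sing} using $\codim \Sing(P_{n,d}) = n+1$; the refined bound $\min\{n+1, \lceil \log_k d \rceil\}$ from (b); and the symmetry $\str_k = \str_{d-k}$ from \Cref{prop:restricted-strength}(b). Setting $k_1 = \lfloor d^{1/(n+1)} \rfloor$ and $k_2 = \lfloor d^{2/(n+1)} \rfloor$, I would partition $\{1, \ldots, d-1\}$ into the blocks $[1, k_1]$, $[k_1+1, k_2]$, $[k_2+1, d-k_1-1]$, $[d-k_1, d-1]$, and bound $\str_k$ below by $n+1$, $\lceil \log_k d \rceil$, $\lceil (n+1)/2 \rceil$, $n+1$ respectively in these four blocks (the first and last by the bound of (b) together with the symmetry; the second by (b) alone, noting that $k \leq k_2$ forces $\lceil \log_k d \rceil \geq \lceil (n+1)/2 \rceil$). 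Invoking \Cref{thm:abp-sum} and summing, an algebraic rearrangement that pulls out the common baseline $(d-1)\lceil (n+1)/2 \rceil$ yields the second formula. The first formula is the corresponding specialization to $d \leq 2^{(n+1)/2}$, where $k_1 = 1$ and the sum over $[k_1+1, k_2]$ is either empty or contributes zero at the boundary.

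\textbf{Main obstacle.} The technically delicate point is the top-dimension case of the inductive geometric claim, which \Cref{lem:subvarieties-induction} does not cover and which must be bridged by the ad hoc irreducibility argument for $P_{n,d}$. The combinatorial summation in (c), by contrast, is routine once the correct four-block partition has been identified.
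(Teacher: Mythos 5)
Your proposal is correct and follows essentially the paper's route: Eisenstein irreducibility of $P_{1,d}$ as the base case, induction via \Cref{lem:subvarieties-induction}, the lower bounds via \Cref{thm:restricted-lowerbound}, and the same block decomposition of $\sum_j \str_j$ using \Cref{thm:abp-sum}, \Cref{prop:restricted-strength}(b) and \Cref{prop:strength-codim-sing} for part (c). The only deviation is that where the paper verifies the hypothesis of \Cref{thm:restricted-lowerbound} only in codimension exactly $n$ and relies on the Bertini-type Remark following that theorem, you instead establish it in all codimensions $\leq n$, correctly noting that the excision argument does not reach the top dimension and bridging this with a direct (and valid) irreducibility argument for $P_{n,d}$ via its singular locus.
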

\begin{proof}
The polynomial $P_{1, d} = x_0^d + x_1 x_2^{d - 1}$ is irreducible, e.g., by the Eisenstein criterion~\cite[Ex.~18.11]{Eis:CommutativeAlgebra} applied to it as an element of $\bbF[x_1, x_2][x_0]$ with the prime ideal $\left< x_1 \right>$.

It follows that $\Zero(P_{1, d})$ is an irreducible variety of degree $d$ in $\bbP^2$, so it does not contain subvarieties of codimension $1$ other than itself; in particular the degree every subvariety of codimension $1$ has degree divisible by $d$.
Using \Cref{lem:subvarieties-induction} inductively, we obtain that every subvariety $X$ of $\Zero(P_{n, d})$ with $\codim X = n$ has degree divisible by $d$. The lower bounds for the slice rank and the restricted strength follow by \Cref{thm:restricted-lowerbound}. 

The lower bound for the homogeneous ABP complexity follows by \Cref{prop:restricted-strength}(b) and \Cref{thm:abp-sum}.
If $d \leq 2^{\frac{n + 1}{2}}$, we use $\str_1(P_{n, d}) = \str_{d - 1}(P_{n, d}) = n + 1$ and $\str_j(P_{n, d}) \geq \lceil \frac{n + 1}{2} \rceil$ from \Cref{prop:strength-codim-sing} for other $j$ to get
\[
\abphom(P_{n, d}) \geq \sum_{j = 1}^{d - 1} \str_j(P_{n, d}) \geq 2(n + 1) + (d - 3) \lceil \frac{n + 1}{2} \rceil = (d - 1) \lceil \frac{n + 1}{2} \rceil + 2 \lfloor \frac{n + 1}{2} \rfloor
\]

For the second bound in (c), we separate the sum $\sum_{j = 1}^{d - 1} \str_j(P_{n, d})$ into three parts.

For $j \leq d^{\frac{1}{n + 1}}$ we have $\str_j(P_{n, d}) = \str_{d - j}(P_{n, d}) \geq n + 1$,
for $d^{\frac{1}{n + 1}} < j \leq d^{\frac{2}{n + 2}}$ we use the lower bound $\str_j(P_{n, d}) = \str_{d - j}(P_{n, d}) \geq \lceil \log_j d \rceil$, and for $d^{\frac{2}{n + 2}} < j < d - d^{\frac{2}{n + 2}}$ \Cref{prop:strength-codim-sing} gives $\str_j(P_{n, d}) \geq \lceil \frac{n + 1}{2} \rceil$.
\end{proof}

We point out that determining explicit hypersurfaces which do not contain low codimension subvarieties of low degree is an extremely hard problem. It is related to the Noether-Lefschetz Theorem, a classical result which, in particular, implies that if $F$ is a general homogeneous polynomial of degree $d$, then $\Zero(F)$ does not contain subvarieties $X$ with $\codim X = 2$ and $\deg X \leq d$. A consequence of \cite{Wu:ConjGrifHarNoetherLefschetz} is that if $F$ is a general homogeneous polynomial of degree $d \geq 6$ in five variables, then $\Zero(F) \subseteq \bbP^4$ does not contain subvarieties $X$ with $\codim X \leq 3$  and $\deg(X) \leq d$. Stronger cohomological results hold for very general hypersurfaces; a series of conjectures and open problems is proposed in~\cite{MR779603}.

\subsection{Slice rank lower bound}\label{section: slice rank}

In this section, we give examples of polynomials for which we prove a slice rank lower bound stronger than the one induced by the strength lower bound of \Cref{prop:strength-codim-sing}. In one instance, we show an improved lower bound for a polynomial with $\Sing(F) = \varnothing$.

We define the \emph{Shioda polynomials} of degree $d \geq 3$ in $n+2$ variables:
\[
S_{n, d}(x_0, \dots, x_{n+1}) = \sum_{i = 0}^{n - 1} x_i x_{i + 1}^{d - 1} + x_{n} x_0^{d - 1} + x_{n+1}^d.
\]
The polynomials $S_{n, d}$ were investigated by Shioda~\cite{MR644520,MR726430} as explicit examples for a cohomological version of the Noether-Lefschetz Theorem in middle dimension.

For even $n$ the decomposition
\begin{equation}\label{eq:shioda-decomp}
S_{n, d} = \sum_{k = 0}^{n/2-1} x_{2k+1}(x_{2k+2}^{d - 1} + x_{2k} x_{2k+1}^{d - 2}) + x_n \cdot x_0^{d - 1} + x_{n + 1} \cdot x_{n + 1}^{d - 1}
\end{equation}
shows that $\sr(S_{n, d}) \leq \frac{n}{2} + 2$.
We prove the matching lower bounds for $n = 2$ and $n = 4$; we conjecture that the bound holds for all even values of $n$.

First consider a modified polynomial
\[
\hat{S}_{n, d}(x_1, \dots, x_{n+1}) = \sum_{i = 1}^{n - 1} x_i x_{i + 1}^{d - 1} + x_{n+1}^d,
\]
which is obtained from $S_{n, d}$ by setting $x_0 = 0$. It is easy to verify that $\Sing(\hat{S}_{n, d})$ coincides with the point $[e_{n-1}] = (0:\dots:1:0:0)$; here we use homogeneous coordinates $(x_1 : \cdots : x_{n+1})$ on $\bbP^n$.

If $n$ is even, then $\codim \Sing(\hat{S}_{n,d}) = n$ and \Cref{prop:strength-codim-sing} gives a lower bound $\sr (\hat{S}_{n, d}) \geq \frac{n}{2}$. An analog of \eqref{eq:shioda-decomp} gives the upper bound $\frac{n}{2} + 1$. We provide a matching lower bound, relying on the following result which improves the lower bound of \Cref{prop:strength-codim-sing}.

\begin{theorem}\label{lem:sr-subspace-sing}
Let $F \in \bbF[x_0, \dots, x_n]$ be a homogeneous polynomial with $\codim \Sing(F) = s$ even.
Then $\sr(F) = \frac{s}{2}$ if and only if There is a linear space $Q \subset \Zero(F)$ of codimension $\frac{s}{2}$ containing one of the irreducible components of $\Sing(F)$. 
\end{theorem}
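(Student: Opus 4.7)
The plan is to prove the two implications separately: $(\Leftarrow)$ is essentially immediate, while $(\Rightarrow)$ refines the argument behind \Cref{prop:strength-codim-sing}.

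For $(\Leftarrow)$: if a linear subspace $Q \subset \Zero(F)$ of codimension $\tfrac{s}{2}$ exists, then \Cref{lem:sr-subspace} yields $\sr(F) \leq \tfrac{s}{2}$, while the matching lower bound $\sr(F) \geq \str(F) \geq \tfrac{s}{2}$ follows from \Cref{prop:strength-codim-sing} together with the assumption that $s$ is even. The additional requirement that $Q$ contains a component of $\Sing(F)$ plays no role in this direction.

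For $(\Rightarrow)$: assume $\sr(F) = \tfrac{s}{2}$ and fix an optimal slice rank decomposition $F = \sum_{k = 1}^{s/2} L_k H_k$. I would first reduce to the case where $L_1 \vvirg L_{s/2}$ are linearly independent: any dependency $L_{s/2} = \sum_{k < s/2} c_k L_k$ allows one to rewrite $F = \sum_{k < s/2} L_k (H_k + c_k H_{s/2})$, a decomposition of length $\tfrac{s}{2} - 1$, contradicting optimality. Set $Q = \Zero(L_1 \vvirg L_{s/2})$, a linear subspace of codimension $\tfrac{s}{2}$ contained in $\Zero(F)$.

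The heart of the argument is the analysis of the auxiliary variety $W = \Zero(L_1 \vvirg L_{s/2}, H_1 \vvirg H_{s/2})$. The partial derivative computation from the proof of \Cref{prop:strength-codim-sing} shows $W \subseteq \Sing(F)$. Since $W$ is cut out by $s$ polynomials in $\bbP^n$, \Cref{thm:Krullheight} bounds each irreducible component $C$ of $W$ by $\codim C \leq s$; on the other hand $W \subseteq \Sing(F)$ and $\codim \Sing(F) = s$ force $\codim C \geq s$, so equality holds. Under the tacit hypothesis $\Sing(F) \neq \varnothing$ (equivalently $s \leq n$), the common zero set of $s$ homogeneous forms of positive degree in $\bbP^n$ is non-empty, so $W$ is non-empty as well.

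The final step is to identify each component of $W$ with a full irreducible component of $\Sing(F)$. Take an irreducible component $C$ of $W$; since $C$ is irreducible and contained in $\Sing(F)$, it lies in some irreducible component $\Sigma$ of $\Sing(F)$. Because $\codim C = s \leq \codim \Sigma$, with $C \subseteq \Sigma$ and both irreducible of the same dimension, we must have $C = \Sigma$. Hence $\Sigma = C \subseteq W \subseteq Q$, exhibiting $Q$ as a linear subspace of the required type. The main obstacle in the argument is precisely this last identification: it is what rules out the a priori possibility that $Q$ meets $\Sing(F)$ only in proper subvarieties of its components, and it relies crucially on both the Krull bound and the codimension hypothesis on $\Sing(F)$.
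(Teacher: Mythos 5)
Your proof is correct and takes essentially the same route as the paper: in the nontrivial direction both arguments set $Q = \Zero(L_1 \vvirg L_{s/2})$, consider the auxiliary variety $\Zero(L_1 \vvirg L_{s/2}, H_1 \vvirg H_{s/2}) \subseteq \Sing(F)$, and use \Cref{thm:Krullheight} together with $\codim \Sing(F) = s$ to force each of its irreducible components to be a full component of $\Sing(F)$ contained in $Q$. Your additional details (linear independence of the $L_k$, non-emptiness of the auxiliary variety when $\Sing(F) \neq \varnothing$, and the explicit lower bound via \Cref{prop:strength-codim-sing} in the easy direction) only spell out steps the paper leaves implicit.
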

\begin{proof}
If $Q \subset \Zero(F)$ is a linear space with $\codim Q = s/2$, then $\sr(F) = s/2$ by \Cref{lem:sr-subspace}.

Conversely, suppose $\sr(F) = s/2$ and let $F = \sum_{k = 1}^{s/2} L_k H_k$ be a minimal slice rank decomposition of $F$.
Let $Q = \Zero( L_1, \dots, L_{s/2})$, and let $X = \Zero(L_1, \dots, L_{s/2}, H_1, \dots, H_{s/2})$. By the minimality of the decomposition, $Q$ is a linear space of codimension $s/2$. Moreover, $X \subseteq Q \subseteq \Zero(F)$.

Since $X$ is defined by $s$ polynomials, by \Cref{thm:Krullheight} all irreducible components of $X$ have codimension at most $s$. As in \Cref{prop:strength-codim-sing}, $X$ is contained in $\Sing(F)$. Therefore, for every irreducible component $X'$ of $X$, we have $s = \codim \Sing(F) \leq \codim X \leq \codim X' \leq s$. This shows that $X'$ and $\Sing(F)$ have the same dimension, therefore $X'$ is an irreducible component of $\Sing(F)$. Since $X' \subseteq X \subseteq Q$, we conclude. 
\end{proof}

\begin{lemma}\label{lem:reduced-shioda-lowerbound}
If $d \geq 3$ and $n$ is even, then $\sr(\hat{S}_{n,d}) = \frac{n}{2} + 1$
\end{lemma}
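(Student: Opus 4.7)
The upper bound $\sr(\hat{S}_{n,d}) \leq n/2 + 1$ follows by pairing consecutive chain terms: for each $k = 1, \ldots, n/2 - 1$ one writes $x_{2k-1} x_{2k}^{d-1} + x_{2k} x_{2k+1}^{d-1} = x_{2k}\bigl(x_{2k-1} x_{2k}^{d-2} + x_{2k+1}^{d-1}\bigr)$, and the leftover terms $x_{n-1} x_n^{d-1}$ and $x_{n+1}^d = x_{n+1} \cdot x_{n+1}^{d-1}$ each contribute one additional slice rank summand, yielding $n/2 + 1$ in total.

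For the lower bound, a direct computation of partial derivatives (using $d \geq 3$) shows that $\Sing(\hat{S}_{n,d})$ is the single point $p = [1:0:\cdots:0]$, so $\codim \Sing(\hat{S}_{n,d}) = n$ is even. By \Cref{prop:strength-codim-sing} we have $\sr(\hat{S}_{n,d}) \geq n/2$; to upgrade this to $\sr(\hat{S}_{n,d}) \geq n/2 + 1$ we invoke \Cref{lem:sr-subspace-sing}, which reduces the task to ruling out the existence of a linear subspace $Q \subset \bbP^n$ of codimension $n/2$ passing through $p$ with $Q \subset \Zero(\hat{S}_{n,d})$. The plan is to argue by induction on even $n \geq 2$.

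For the inductive step ($n \geq 4$), assume for contradiction such a $Q$ exists, and lift it to an affine subspace $V = \bbC e_1 \oplus W \subset \bbC^{n+1}$ with $\dim W = n/2$. Substituting $\alpha e_1 + w$, with $w = (w_1, \ldots, w_{n+1}) \in W$, yields $\hat{S}_{n,d}(\alpha + w_1, w_2, \ldots, w_{n+1}) = \alpha \, w_2^{d-1} + (\text{$\alpha$-independent terms})$, so the vanishing of the coefficient of $\alpha$ on all of $W$ forces $W \subset \{x_2 = 0\}$. With this constraint, $\hat{S}_{n,d}|_V$ reduces, after a shift of indices, to $\hat{S}_{n-2,d}(w_3, \ldots, w_{n+1})$. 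Let $\pi \colon \bbC^{n+1} \to \bbC^{n-1}$ be the projection forgetting $x_1, x_2$; its kernel is $\mathrm{span}(e_1, e_2)$, and the combination $W \subset \{x_2 = 0\}$ together with the direct-sum condition $W \cap \bbC e_1 = 0$ forces $W \cap \ker \pi = 0$. Hence $\pi|_W$ is injective, and $\pi(W) \subset \bbC^{n-1}$ is an $n/2$-dimensional linear subspace on which $\hat{S}_{n-2,d}$ vanishes, producing a codimension $n/2 - 1$ projective linear subspace of $\Zero(\hat{S}_{n-2,d})$. By \Cref{lem:sr-subspace} this gives $\sr(\hat{S}_{n-2,d}) \leq n/2 - 1$, contradicting the inductive hypothesis $\sr(\hat{S}_{n-2,d}) = n/2$.

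The base case $n = 2$ is handled directly: any line in $\bbP^2$ through $[e_1]$ has the form $\{a x_2 + b x_3 = 0\}$ with $(a,b) \neq 0$, and substitution shows that $\hat{S}_{2,d} = x_1 x_2^{d-1} + x_3^d$ never vanishes identically along such a line. The main delicate point in the inductive step is the injectivity of $\pi|_W$, which crucially requires combining both the derived vanishing $W \subset \{x_2 = 0\}$ and the choice of complement $V = \bbC e_1 \oplus W$; neither condition alone is sufficient to control the intersection of $W$ with $\ker \pi$.
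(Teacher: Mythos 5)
Your proof is correct and follows essentially the same route as the paper: the lower bound is obtained from \Cref{prop:strength-codim-sing} together with \Cref{lem:sr-subspace-sing}, and an induction showing that a putative codimension-$\frac{n}{2}$ linear space through the singular point inside $\Zero(\hat{S}_{n,d})$ would produce a codimension-$\frac{n-2}{2}$ linear space inside $\Zero(\hat{S}_{n-2,d})$, contradicting the inductive hypothesis via \Cref{lem:sr-subspace}. The only differences are cosmetic: you (correctly) locate the singular point at $[e_1]$ and eliminate the variables $x_1,x_2$ by extracting the coefficient of $\alpha$, whereas the paper works at the other end of the chain (writing the singular point as $[e_{n-1}]$ and setting $x_{n-1}=x_n=0$), and your base case is $n=2$ rather than $n=0$.
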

\begin{proof}
If $n$ is even, \cref{prop:strength-codim-sing} gives the lower bound $\sr(\hat{S}_{n,d}) \geq \frac{n}{2}$. We use induction on $n$ to prove that $\sr(\hat{S}_{n,d}) \neq \frac{n}{2}$. If $n = 0$ the statement is clear.

Suppose by contradiction $\sr(\hat{S}_{n, d}) = \frac{n}{2}$. By \cref{lem:sr-subspace-sing} there exists a projective linear subspace $Q \subset \Zero(\hat{S}_{n, d})$ of dimension $\frac{n}{2}$ containing the singular point $[e_{n-1}]$. Let $H = \Zero(x_{n-1})$ and let $Q' = Q \cap H$. Since $Q$ contains the point $[e_{n - 1}]$, which is not in $H$, we have $\dim Q' = \frac{n}{2} - 1$.

Observe that for every point $(v_1 : \cdots : v_n) \in Q'$, we have $v_{n} = 0$. To see this, fix $[v] \in Q'$ and consider the line spanned by $[v] \in Q'$ and $[e_{n-1}]$; this line is contained in $Q$, hence in $\Zero(\hat{S}_{n, d})$. In particular, for every $\alpha,\beta \in \bbC$, we have
\[
0 = \hat{S}_{n, d}(\alpha v + \beta e_{n-1}) = \alpha^d (\sum_{k = 1}^{n - 3} v_k v_{k + 1}^{d - 1} + v_{n + 1}^d) + \alpha^{d - 1}\beta v_{n}^{d - 1}.
\]
Therefore, this expression must be $0$ as a polynomial in $\alpha,\beta$ and in particular $v_n = 0$ because $v_{n}^{d - 1}$ is the coefficient of $\alpha^{d - 1}\beta$.

This shows that the existence of a subspace $Q \subset \Zero(\hat{S}_{n, d})$ of dimension $\frac{n}{2}$ containing $[e_{n-1}]$ implies the existence of a subspace $Q' \subset \Zero(\hat{S}_{n, d}) \cap \Zero(x_{n - 1}, x_n)$ of dimension $\frac{n}{2} - 1$. Note that substituting $x_{n - 1} = x_n = 0$ into $\hat{S}_{n, d}$, one obtains, up to renaming the variables, the polynomial $\hat{S}_{n - 2, d}$. Therefore, the existence of $Q'$ implies $\sr(\hat{S}_{n - 2, d}) = \frac{n}{2} - 1 = \frac{n - 2}{2}$, in contradiction with the induction hypothesis. This concludes the proof. 
\end{proof}

\begin{theorem}\label{thm:exactslicerank}
If $d \geq 5$, then $\sr(S_{4, d}) = 4$.
\end{theorem}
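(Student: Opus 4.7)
The plan is to establish the upper bound $\sr(S_{4,d}) \leq 4$ directly from the decomposition~\eqref{eq:shioda-decomp} specialized to $n = 4$, which exhibits $S_{4,d}$ as a sum of four products of a linear form and a polynomial of degree $d-1$.

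For the matching lower bound, I would first verify $\Sing(S_{4,d}) = \varnothing$. The partial derivatives are $\partial_{x_5} S_{4,d} = d x_5^{d-1}$ and $\partial_{x_i} S_{4,d} = x_{i+1}^{d-1} + (d-1) x_{i-1} x_i^{d-2}$ for $i \in \{0, \ldots, 4\}$ with indices taken mod~$5$. At any singular point $x_5 = 0$, and multiplying the five cyclic equations yields $(x_0 x_1 x_2 x_3 x_4)^{d-1} = -(d-1)^5 (x_0 x_1 x_2 x_3 x_4)^{d-1}$; if the product is nonzero this forces $(d-1)^5 = -1$, impossible for integer $d \geq 2$, and if it vanishes the cyclic equations propagate to force all coordinates to be zero. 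Thus $\codim \Sing(S_{4,d}) = 6$, and \Cref{prop:strength-codim-sing} yields the baseline $\sr(S_{4,d}) \geq 3$. I would then invoke \Cref{lem:sr-subspace-sing} with $s = 6$: since $\Sing(S_{4,d})$ has no irreducible components, the right-hand side of the iff is vacuously false, forcing $\sr(S_{4,d}) \neq 3$ and hence $\sr(S_{4,d}) \geq 4$.

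The step where I expect the most care is this application of \Cref{lem:sr-subspace-sing} in the empty-singular-locus regime, since the proof of that lemma proceeds through an irreducible component of the auxiliary scheme $X \subseteq \Sing(F)$. If a direct appeal to the iff is insufficient, my backup is to assume for contradiction $\sr(S_{4,d}) = 3$, take a $2$-plane $Q \subseteq \Zero(S_{4,d})$ from \Cref{lem:sr-subspace}, and rule out $Q \subseteq \Zero(x_i)$ for every $i$: for $i \in \{0, \ldots, 4\}$, the restriction $S_{4,d}|_{x_i = 0}$ coincides with $\hat S_{4,d}$ after cyclic renaming, and $\sr(\hat S_{4,d}) = 3$ from \Cref{lem:reduced-shioda-lowerbound} forbids a codimension-$2$ linear subspace of $\Zero(\hat S_{4,d}) \subseteq \bbP^4$; for $i = 5$, the polynomial $\tilde S = \sum_{i=0}^{4} x_i x_{i+1 \bmod 5}^{d-1}$ again has empty singular locus by the same cyclic argument, so $\sr(\tilde S) \geq 3$ by \Cref{prop:strength-codim-sing}. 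In the remaining case every restriction $\lambda_i = x_i|_Q$ is a nonzero linear form on $Q \cong \bbP^2$ satisfying the polynomial identity $\sum_{i=0}^{4} \lambda_i \lambda_{i+1 \bmod 5}^{d-1} + \lambda_5^d = 0$ in three variables, and the contradiction would need to be extracted from the over-determined nature of this identity for $d \geq 5$.
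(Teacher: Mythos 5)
Your upper bound and your verification that $\Sing(S_{4,d})=\varnothing$ are fine, but the main route for the lower bound has a genuine gap: \Cref{lem:sr-subspace-sing} cannot be invoked when $\Sing(F)=\varnothing$. The ``only if'' direction of that theorem is proved by taking the auxiliary variety $X=\Zero(L_1,\dots,L_{s/2},H_1,\dots,H_{s/2})\subseteq\Sing(F)$ and producing an irreducible component $X'$ of $X$ of codimension exactly $s$; this needs $X\neq\varnothing$, which is guaranteed (by the projective dimension theorem) only when the number $s$ of defining equations is at most $n$, i.e.\ when $\Sing(F)\neq\varnothing$. With $\Sing(F)=\varnothing$ one has $s=n+1$, $X$ is empty, and the argument produces nothing. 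Indeed the statement read literally with the convention $\codim\varnothing=n+1$ is false: $F=x_0x_1+x_2x_3$ in $\bbP^3$ has empty singular locus, $s/2=2$ and $\sr(F)=2$, yet no linear space contains a component of $\varnothing$; more relevantly, for every $d$ there exist smooth degree-$d$ hypersurfaces in $\bbP^5$ containing a $2$-plane (take $x_3A+x_4B+x_5C$ generic), so ``empty singular locus'' alone can never rule out $\sr=3$ in $\bbP^5$. The conclusion $\sr(S_{4,d})\neq 3$ must therefore exploit the specific form of $S_{4,d}$, not just its smoothness. (The paper applies \Cref{lem:sr-subspace-sing} only to $\hat S_{n,d}$, whose singular locus is a point.)

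Your backup route has the right endgame but omits exactly the hard part. Ruling out $Q\subseteq\Zero(x_i)$ for $i\le 4$ via $\sr(\hat S_{4,d})=3$ (\Cref{lem:reduced-shioda-lowerbound} plus \Cref{lem:sr-subspace}), and for $i=5$ via the smoothness of $\sum x_ix_{i+1}^{d-1}$, is correct as far as it goes -- but the paper's proof shows the opposite of what you then need: it proves that any codimension-$3$ plane $Q\subseteq\Zero(S_{4,d})$ \emph{must} lie in a coordinate hyperplane, and this is the substance of the argument. Concretely, using the $\bbZ/5\bbZ$ symmetry, one restricts $S_{4,d}$ to lines spanned by points of $Q$ lying on suitable coordinate subspaces, extracts the coefficients of $\alpha^{d}$, $\alpha^{d-2}\beta^2$, $\alpha^{d-3}\beta^3$ (this is where $d\ge 5$ enters), and concludes first that $Q$ meets certain coordinate lines, then that $Q$ contains one of the points $[e_k]$, and finally (differentiating along lines through $[e_k]$) that $Q\subseteq\Zero(x_{k-1})$; only then does the reduction to $\hat S_{4,d}$ deliver the contradiction. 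Your ``remaining case'', where all $\lambda_i=x_i|_Q$ are nonzero and one hopes the identity $\sum\lambda_i\lambda_{i+1}^{d-1}+\lambda_5^d=0$ on $\bbP^2$ is over-determined, is not an edge case to be waved at: it is the theorem, and no argument is supplied for it. As it stands the proposal does not prove $\sr(S_{4,d})\ge 4$.
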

\begin{proof}
Let $\rho \colon \bbP^5 \to \bbP^5$ be the map defined by $\rho(x_0 : x_1 : \dots :x_5) = (x_4 : x_0 : x_1 : x_2 : x_3 :x_5 )$  which cyclically permutes the first $5$ coordinates of a point $(x_0:x_1:\dots:x_5)$.
Note that the hypersurface $\Zero(S_{4, d})$ is mapped to itself by $\rho$.

The lower bound given by \Cref{prop:strength-codim-sing} is $\sr(S_{4,d}) \geq 3$ and assume by contradiction that equality holds. Then there exists a linear space $Q \subset \Zero(S_{4, d}) \subset \bbP^5$ of codimension $3$.

First, we prove a series of claims about the plane $Q$ culminating with the claim that $Q$ contains one of the five points $[e_k] = \rho^k [e_0]$, where $e_0 = (1,0,\dots,0)$. Then we derive a contradiction with the lower bound for $\hat{S}_{n, d}$.

Let $A_0$ be the line $\Zero(x_1, x_3, x_4, x_5) \subseteq \bbP^5$ and $A_k = \rho^k A_0$. In other words, $A_0$ is the set of points of the form $(x_0:0:x_2:0:0:0)$ and $A_k$ is obtained by cyclically shifting the first five coordinates.

\begin{inclaim}\label{claim:triple}
$Q$ intersects $A_0 \cup A_1 \cup A_2$.
\end{inclaim}
\begin{claimproof}
Since $\dim Q =2$, by \cite[Prop. 11.4]{Harris:AlgGeo} it intersects any codimension $2$ linear subspace. Let $p = (p_0:0:p_2:0:p_4:p_5) \in Q \cap \Zero(x_1, x_3)$ and $q = (q_0:q_1:0:q_3:0:q_5) \in Q \cap \Zero(x_2, x_4)$ be two points which lie in the respective intersections.

If $p = q$, then $p = q = [e_0] \in A_0$ and the Claim is verified.

Suppose $p \neq q$. The line joining $p$ and $q$ lies in $Q$ and, therefore, in $\Zero(S_{4, d})$.
Let $\hat{p}, \hat{q} \in \bbC^{6}$ be representatives of $p$ and $q$ respectively. We have $S_{4, d}(\alpha \hat{p} + \beta \hat{q}) = 0$ for all values of $\alpha$ and $\beta$. Expand this expression as a polynomial in $\alpha,\beta$, and consider the coefficients of $\alpha^d$, $\alpha^{d - 2}\beta^2$, $\alpha^{d - 3} \beta^3$; these must be $0$, hence 
\begin{align*}
&    p_4 p_0^{d - 1} + p_5^d = 0 ,\\
&    \binom{d-1}{2} p_4 p_0^{d - 3} q_0^2 + \binom{d}{2} p_5^{d - 2} q_5^2 = 0 ,\\
&    \binom{d-1}{3} p_4 p_0^{d - 4} q_0^3 + \binom{d}{3} p_5^{d - 3} q_5^3 = 0.
\end{align*}
Rewrite these equations as
\begin{align*}
    &p_4 p_0^{d - 1} = - p_5^d ,\\
    &(d - 2) p_4 p_0^{d - 3} q_0^2 = -d p_5^{d - 2} q_5^2, \\
    &(d - 3) p_4 p_0^{d - 4} q_0^3 = -d p_5^{d - 3} q_5^3.
\end{align*}
If $p_5 \neq 0$, then dividing the equations by $p_4 p_0^{d - 1} = -p_5^d$, we obtain
\begin{align*}
    (d - 2) \left(\frac{q_0}{p_0}\right)^2 = d \left(\frac{q_5}{p_5}\right)^2 \\
    (d - 3) \left(\frac{q_0}{p_0}\right)^3 = d \left(\frac{q_5}{p_5}\right)^3
\end{align*}
from which we have $q_0 = q_5 = 0$, which implies $q \in A_1$.
If, on the other hand, $p_5 = 0$, then either $p_0 = 0$ and $p \in A_2$, or $p_4 = 0$ and $p \in A_0$.
\end{claimproof}

\begin{inclaim}\label{claim:pair}
If $Q \cap A_0 \neq \varnothing$ and $Q \cap A_2 \neq \varnothing$, then $Q$ contains $[e_0]$, $[e_2]$, or $[e_4]$.
\end{inclaim}
\begin{claimproof}
Let $p$ be a point in $Q \cap A_0$ and $q \in Q \cap A_2$, so 
$p = (p_0:0:p_2:0:0:0)$ and $q = (0:0:q_2:0:q_4:0)$.
If $p = q$, then they are equal to $[e_2]$.
If $p \neq q$, then there is a point on the line that they span which has zero second coordinate.
Let this linear combination be $r = (r_0:0:0:0:r_4:0)$.
Since $r \in Q \subset \Zero(S_{4, n})$, we have $S_{4, n}(r) = r_4 r_0^{d - 1} = 0$, so $r$ is either $[e_0]$ or $[e_4]$.
\end{claimproof}

\begin{inclaim}
$Q$ contains one of the five points $e_k$.
\end{inclaim}
\begin{claimproof}
Let $S = \{ k \mid Q \cap \rho^k A_0 \neq 0 \}$. Since $\rho^5 = \id$, we can see $S$ as a subset of $\bbZ / 5 \bbZ$.
\Cref{claim:triple} implies that $S$ contains at least one element of $\{0, 1, 2\}$.
Because of the cyclic symmetry, an analogous statement is true for every three consecutive values in $\bbZ/5\bbZ$.
Similarly, cyclically shifted versions of \Cref{claim:pair} imply that if $S$ contains $k$ and $k + 2$, then $Q$ contains one of the five basis points.

Without loss of generality, $0 \in S$.
If $2 \in S$ or $3 \in S$, then \Cref{claim:pair} guarantees $Q$ contains one of the five basis points. If both $2, 3 \notin S$, then \Cref{claim:triple} applied to the consecutive triples $\{1,2,3\}$ and $\{2,3,4\}$ implies that $1,4 \in S$. Since they differ by two, \Cref{claim:pair} applies and we conclude.
\end{claimproof}

By shifting $Q$ cyclically we can assume that $[e_1] \in Q$.

\begin{inclaim}
If $e_1 \in Q$, then $Q$ lies in the hyperplane $\Zero(x_0)$.
\end{inclaim}
\begin{claimproof}
Note that if the line spanned by two points $[v],[w]$ lies in a hypersurface $\Zero(F)$, then $\sum \frac{\partial F}{ \partial x_i} (v) w_j = 0$. Indeed, the function $f(t) = F(\alpha  + t w)$ is identically $0$ and so is its derivative in $t$ at $t=0$. By the chain rule, we obtained the desired equality. In particular, for every two points $p,q \in Q$, we obtain $\sum \frac{\partial S_{4, n}}{\partial x_i}(p) q_i = 0$. Fixing $p = [e_1]$, this guarantees $q_0 = 0$ for every $q \in Q$, showing $Q \subseteq \Zero(x_0)$.
\end{claimproof}

We deduce that $Q$ is contained in $\Zero(S_{4, n}) \cap \Zero(x_0) = \Zero (S_{4,n},x_0) = \Zero (\hat{S}_{4,n},x_0)$. Therefore $Q\subset \Zero (\hat{S}_{4,n})$ when regarded as a hypersurface in $Z(x_0) = \bbP^n$. By \Cref{lem:sr-subspace}, this implies $\sr(\hat{S}_{4, n}) \leq 2$, in  contradiction with \Cref{lem:reduced-shioda-lowerbound}. This contradiction completes the proof.
\end{proof}

As a corollary we obtain a lower bound for homogeneous ABP size for Shioda's polynomials in six variables, which improves on Kumar's lower bound for a polynomial with the same number of variables.

\begin{corollary}
$\abphom(S_{4, d}) \geq 3(d - 1) + 2$.
\end{corollary}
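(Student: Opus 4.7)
The plan is to apply \Cref{thm:abp-sum}, which gives $\abphom(S_{4,d}) \geq \sum_{j=1}^{d-1} \str_j(S_{4,d})$, and to bound the $j$-restricted strength layer by layer. The two extreme layers $j = 1$ and $j = d-1$ are handled directly by \Cref{thm:exactslicerank}: we have $\str_1(S_{4,d}) = \sr(S_{4,d}) = 4$ by \Cref{prop:restricted-strength}(c), and by the symmetry of \Cref{prop:restricted-strength}(b) also $\str_{d-1}(S_{4,d}) = \str_1(S_{4,d}) = 4$. For the intermediate layers $2 \leq j \leq d-2$, I would invoke $\str_j(S_{4,d}) \geq \str(S_{4,d})$ from \Cref{prop:restricted-strength}(a) and then apply the singular locus bound of \Cref{prop:strength-codim-sing}.

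The main computational step is to show that $\Sing(S_{4,d}) = \varnothing$ in $\bbP^5$, which yields $\codim \Sing(S_{4,d}) = 6$ and hence $\str(S_{4,d}) \geq 3$. From $\partial S_{4,d}/\partial x_5 = d\, x_5^{d-1}$, any singular point has $x_5 = 0$. The remaining partials are $\partial_i = (d-1)\, x_{i-1} x_i^{d-2} + x_{i+1}^{d-1}$ for $i = 0, \dots, 4$ with indices taken modulo $5$. If some $x_i$ with $i \in \{0, \dots, 4\}$ vanishes, the cyclic relations propagate this and force all coordinates to be zero, which is not a point of $\bbP^5$. If instead all $x_i$ are nonzero, multiplying the five equations $x_{i+1}^{d-1} = -(d-1)\, x_{i-1} x_i^{d-2}$ together and cancelling the common nonzero factor $\prod_{i=0}^{4} x_i^{d-1}$ yields $1 = -(d-1)^5$, which has no solution for any integer $d \geq 2$.

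Putting the bounds together, $\str_j(S_{4,d}) \geq 3$ for every $j$ with $2 \leq j \leq d-2$, and summing over all layers gives
\[
\abphom(S_{4,d}) \geq 4 + 3(d-3) + 4 = 3(d-1) + 2,
\]
as desired. The argument is short once \Cref{thm:exactslicerank} is in hand; the only non-routine ingredient is the emptiness of $\Sing(S_{4,d})$, which is precisely what the cyclic multiplication trick above delivers.
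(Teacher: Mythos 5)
Your proof is correct and follows essentially the same route as the paper: apply \Cref{thm:abp-sum}, use $\str_1 = \str_{d-1} = \sr(S_{4,d}) = 4$ from \Cref{thm:exactslicerank}, and bound the intermediate layers by $3$ via \Cref{prop:strength-codim-sing}. Your explicit verification that $\Sing(S_{4,d}) = \varnothing$ (via the cyclic multiplication argument) just fills in a detail the paper leaves implicit when invoking \Cref{prop:strength-codim-sing}.
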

\begin{proof}
We have $\str_k(S_{4, n)} \geq 3$ from \Cref{prop:strength-codim-sing} and $\str_{d - 1} = \str_1(S_{4, n}) = \sr(S_{4, n)} = 4$ from \Cref{thm:exactslicerank}.
Using \Cref{thm:abp-sum} we get the required lower bound.
\end{proof}

A similar (but easier) argument can be used to prove $\sr(S_{2, d}) = 3$. It follows that $\abphom(S_{2, d}) \geq 2(d - 1) + 2$.
We conjecture that this can be generalized to all Shioda polynomials.

\begin{conjecture}
\label{conj:lb}
For $n$ even we have $\sr(S_{n, d}) = \frac{n}{2} + 2$ and, consequently, 
\[
\abphom(S_{n, d}) \geq \frac{n + 2}{2}(d - 1) + 2.
\]
\end{conjecture}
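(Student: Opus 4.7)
The upper bound $\sr(S_{n,d}) \leq \frac{n}{2}+2$ is already furnished by~\eqref{eq:shioda-decomp}. A direct calculation also shows $\Sing(S_{n,d}) = \varnothing$ for $d \geq 3$: after setting $x_{n+1} = 0$, the remaining partial derivatives form a cyclic system whose only potential nonzero solutions would require $(1-d)^{n+1} = 1$, which has no solution for integer $d \geq 3$ since $|1-d|\geq 2$. Hence \Cref{prop:strength-codim-sing} gives the baseline $\str_j(S_{n,d}) \geq \frac{n}{2}+1$ for every $1 \leq j \leq d-1$. The ABP bound in the conjecture will then follow from \Cref{thm:abp-sum} combined with $\str_1(S_{n,d}) = \str_{d-1}(S_{n,d}) = \sr(S_{n,d})$ (\Cref{prop:restricted-strength}(b)) via a direct summation, \textbf{provided} the missing input $\sr(S_{n,d}) \geq \frac{n}{2}+2$ is established. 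By \Cref{lem:sr-subspace} this amounts to showing that $\Zero(S_{n,d}) \subset \bbP^{n+1}$ contains no linear subspace of dimension $\frac{n}{2}$.

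The plan is to mimic the argument of \Cref{thm:exactslicerank}. Suppose for contradiction that such a $Q \subset \Zero(S_{n,d})$ exists. First I would carry out the \emph{easy reduction step}: direct differentiation gives $\nabla S_{n,d}|_{[e_k]} = \mathbf{e}_{(k-1) \bmod (n+1)}$ for each $k \in \{0,\dots,n\}$, so if $[e_k] \in Q$, then expanding $S_{n,d}(\alpha e_k + \beta \hat q) = 0$ along any line in $Q$ through $[e_k]$ and extracting the coefficient of $\alpha^{d-1}\beta$ forces $Q \subseteq \Zero(x_{(k-1)\bmod(n+1)})$. By the cyclic symmetry of $S_{n,d}$ in $x_0,\dots,x_n$ we may assume $k = 1$, so $Q \subseteq \Zero(x_0)$; but $S_{n,d}|_{x_0 = 0} = \hat{S}_{n,d}$, and $Q$ then realizes a linear subspace of codimension $\frac{n}{2}$ inside $\Zero(\hat{S}_{n,d}) \subset \bbP^n$, contradicting $\sr(\hat{S}_{n,d}) = \frac{n}{2}+1$ from \Cref{lem:reduced-shioda-lowerbound}.

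The remaining --- and main --- obstacle is to show that $Q$ must contain at least one coordinate point $[e_k]$ with $0 \leq k \leq n$. For $n = 4$ this is the content of \Cref{claim:triple} and \Cref{claim:pair}, which exploit that $\dim Q = 2$ forces $Q$ to meet the codimension-$2$ subspaces $\Zero(x_1,x_3)$ and $\Zero(x_2,x_4)$, followed by expansion of $S_{4,d}$ along the line joining two such intersection points. My proposed generalization is to work with the codimension-$\frac{n}{2}$ ``alternating'' subspaces $B_1 = \Zero(x_1, x_3, \dots, x_{n-1})$ and $B_2 = \Zero(x_2, x_4, \dots, x_n)$, together with all their cyclic shifts, each of which is forced to meet $Q$ by dimension counting, and on which $S_{n,d}$ restricts to the sparse forms $x_n x_0^{d-1}+x_{n+1}^d$ and $x_0 x_1^{d-1}+x_{n+1}^d$, respectively. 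Expanding $S_{n,d}(\alpha \hat p + \beta \hat q) = 0$ for intersection points $p \in Q \cap B_1$ and $q \in Q \cap B_2$ (or their cyclic shifts), and extracting several $\alpha^{d-i}\beta^i$ coefficients (in analogy with the $\alpha^d$, $\alpha^{d-2}\beta^2$, $\alpha^{d-3}\beta^3$ extraction of \Cref{claim:triple}), should pin down one of $p, q$ onto a special line $A_k$ spanned by $[e_k]$ and $[e_{(k+2)\bmod(n+1)}]$, and then onto a basis vertex $[e_j]$ by a generalization of \Cref{claim:pair}. The delicate and expected-hard part is the combinatorial bookkeeping over all $n+1$ cyclic shifts: one must exhibit that no configuration of intersection points escapes via the union of these case analyses, so that $Q$ is forced to contain some $[e_j]$. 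If this enumeration can be completed cleanly, the lower bound $\sr(S_{n,d}) \geq \frac{n}{2}+2$ and the stated ABP consequence both follow.
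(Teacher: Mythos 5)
There is a genuine gap, and you have in fact named it yourself: the entire content of the lower bound $\sr(S_{n,d}) \geq \frac{n}{2}+2$ rests on showing that a codimension-$(\frac{n}{2}+1)$ linear subspace $Q \subseteq \Zero(S_{n,d})$ must contain one of the coordinate points $[e_0], \dots, [e_n]$, and your proposal only sketches a strategy for this ("if this enumeration can be completed cleanly\dots"), it does not carry it out. Note that the statement you are proving is stated in the paper as \Cref{conj:lb}, i.e.\ as an open conjecture: the paper itself proves it only for $n=2$ and $n=4$ (\Cref{thm:exactslicerank} and the remark following it), precisely because the step you leave open does not obviously generalize. The parts of your argument that are complete are correct and coincide with what is already in the paper: the upper bound from~\eqref{eq:shioda-decomp}; the smoothness of $\Zero(S_{n,d})$ (the telescoping product of the cyclic relations $x_{i+1}^{d-1}=-(d-1)x_{i-1}x_i^{d-2}$ indeed gives $(1-d)^{n+1}=1$), hence $\str_j \geq \frac{n}{2}+1$ via \Cref{prop:strength-codim-sing}; the tangent-hyperplane reduction showing $[e_k]\in Q$ forces $Q \subseteq \Zero(x_{k-1})$, which hands the contradiction to \Cref{lem:reduced-shioda-lowerbound} (which the paper does prove for all even $n$); and the derivation of the ABP bound from \Cref{thm:abp-sum}.

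Where your proposed generalization is likely to run into real trouble is the analogue of \Cref{claim:pair} and the final combinatorial claim. Your analogue of \Cref{claim:triple} plausibly survives: for $p \in Q \cap B_1$ and $q \in Q \cap B_2$ the coefficients of $\alpha^{d-i}\beta^i$ with $2 \le i \le d-3$ still involve only the terms $x_nx_0^{d-1}$ and $x_{n+1}^d$, so the same two-equation trick forces $q_0=q_{n+1}=0$ (when $p_{n+1}\neq 0$). But the conclusion is then only that $q$ lies in the span of $e_1, e_3, \dots, e_{n-1}$, which for $n>4$ is a subspace of dimension $\frac{n}{2}-1$, not a line $A_k$; likewise the intersection points carry $\frac{n}{2}+2$ potentially nonzero coordinates. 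The $n=4$ proof of \Cref{claim:pair} uses in an essential way that the relevant loci are lines, so that a single vanishing condition ($r_4r_0^{d-1}=0$) pins down a basis point, and the closing case analysis uses the specific structure of $\bbZ/5\bbZ$. Neither transfers mechanically to $\bbZ/(n+1)\bbZ$ with higher-dimensional intersection loci, so a new idea is needed exactly at the step you defer; as written, the proposal does not prove the conjecture.
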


\section{Geometry of algebraic branching programs}

We have seen that ABPs are related to degree-restricted strength decompositions, and degree-restricted strength decompositions are closely related to subvarieties of the hypersurface defined by the computed polynomial.
One can also connect existence of ABPs to subvarieties directly.

\begin{theorem}
Let $F$ be a homogeneous polynomial of degree $d$.
$F$ is computed by a homogeneous ABP with $w_k$ vertices in layer $k$
if and only if there exists a chain of ideals 
\[
I_1 \supset I_2 \supset \dots \supset I_{d - 1} \supset I_d = \left<F\right>
\]
such that $I_k$ is generated by $w_k$ homogeneous polynomials of degree $k$.
\end{theorem}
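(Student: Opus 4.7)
The plan is to establish a direct correspondence between homogeneous ABPs and descending chains of homogeneous ideals, where the generators of the $k$-th ideal are exactly the polynomials computed at the $k$-th layer vertices. Both directions rely on the standard graded fact that if a homogeneous polynomial of degree $k$ lies in an ideal generated by homogeneous polynomials of degree $k-1$, then it can be written as a linear combination of those generators with coefficients that are linear forms (take the appropriate graded piece of any expression).

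For the forward direction ($\Rightarrow$), suppose $\mathsf{A}$ is a homogeneous ABP computing $F$ with source $s$, sink $t$, and layers $V_0 = \{s\}, V_1, \dots, V_{d-1}, V_d = \{t\}$. Define
\[
I_k = \langle \mathsf{A}[s, v] : v \in V_k \rangle \quad \text{for } k = 1, \dots, d.
\]
Each $\mathsf{A}[s, v]$ with $v \in V_k$ is homogeneous of degree $k$, so $I_k$ is generated by $w_k = |V_k|$ homogeneous polynomials of degree $k$, and $I_d = \langle \mathsf{A}[s, t] \rangle = \langle F \rangle$. The path decomposition through the previous layer yields, for each $v \in V_k$,
\[
\mathsf{A}[s, v] = \sum_{u \in V_{k-1}} \ell_{u,v}\, \mathsf{A}[s, u],
\]
where $\ell_{u,v}$ is the label of the edge $u \to v$ (and $0$ when no such edge exists). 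This exhibits every generator of $I_k$ as an element of $I_{k-1}$, so $I_k \subseteq I_{k-1}$.

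For the reverse direction ($\Leftarrow$), given a chain $I_1 \supset \dots \supset I_d = \langle F \rangle$ with stated generator counts, fix homogeneous generators $g_{k,1}, \dots, g_{k,w_k}$ of $I_k$ (with $g_{d,1} = F$). Build an ABP with source $s$, sink $t$, and inner vertices $v_{k,j}$ for $k = 1, \dots, d-1$, $j = 1, \dots, w_k$. For layer $1$, the generators $g_{1,j}$ are themselves linear forms, so label the edge $s \to v_{1,j}$ by $g_{1,j}$. For $2 \leq k \leq d-1$, the inclusion $g_{k,j} \in I_{k-1}$ together with homogeneity of degrees $k$ and $k-1$ lets us write $g_{k,j} = \sum_i \ell^{(k)}_{i,j}\, g_{k-1,i}$ with $\ell^{(k)}_{i,j}$ linear forms; label the edge $v_{k-1,i} \to v_{k,j}$ by $\ell^{(k)}_{i,j}$. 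Similarly, $F \in I_{d-1}$ produces linear forms $\ell^{(d)}_{i}$ with $F = \sum_i \ell^{(d)}_i\, g_{d-1,i}$, which label the edges $v_{d-1,i} \to t$. A straightforward induction on $k$ then shows $\mathsf{A}[s, v_{k,j}] = g_{k,j}$, and in particular $\mathsf{A}[s, t] = F$. There is no serious obstacle; the only substantive point is the graded coefficient observation above, which is exactly what matches the linearity requirement on ABP edge labels.
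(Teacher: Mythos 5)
Your proposal is correct and follows essentially the same route as the paper: define $I_k$ as the ideal generated by the polynomials computed from the source to the layer-$k$ vertices for one direction, and use the degree-graded rewriting of each generator of $I_{k}$ in terms of the previous layer's generators with linear-form coefficients to build the ABP for the other. The only difference is cosmetic: you spell out explicitly the graded-piece argument guaranteeing linear coefficients, which the paper leaves implicit.
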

\begin{proof}
Suppose a homogeneous ABP $\mathsf{A}$ computes the polynomial $F$.
Recall that we denote the polynomial computed between vertices $v$ and $w$ by $\mathsf{A}[v, w]$.
Let $s$ and $t$ be the source and the sink of $\mathsf{A}$, and let $V_k$ be the set of vertices in the $k$-th layer.

Define $I_k$ to be the ideal generated by polynomials $\mathsf{A}[s, v]$ for all $v \in V_k$.
These polynomials are homogeneous degree $k$ polynomials, because every path from $s$ to $v \in V_k$ has length $k$.
If $w \in V_{k + 1}$, then
\begin{equation}\label{eq:abp-computation}
\mathsf{A}[s, w] = \sum_{v \in V_k} \mathsf{A}[s, v] \mathsf{A}[v, w],
\end{equation}
so all generators of $I_{k+1}$ lie in $I_k$ and thus $I_k \supset I_{k + 1}$.
The last layer of $\mathsf{A}$ contains only the sink, and the corresponding ideal is $\left< F \right>$.

On the other hand, given a sequence of ideals $I_1 \supset I_2 \supset \dots \supset I_{d - 1} \supset I_d = \left<F\right>$ such that $I_k$ is generated by homogeneous polynomials of degree $k$.
Let $G_{k1}, \dots, G_{kw_k}$ be the generators of $I_k$.
Since $I_j \supset I_{j + 1}$, we have
\begin{equation}\label{eq:abp-ideals}
G_{k+1, j} = \sum_{k = 1}^{w_j} G_{ki} L_{kij}
\end{equation}
for some linear forms $L_{kij}$.
Let $\mathsf{A}$ be an ABP with $w_k$ vertices in layer $k$ such that the edge from the $i$-th vertex in the $k$-th layer to the $j$-th vertex in the $(k+1)$-th layer is labeled by $L_{kij}$.
Then the equations~\eqref{eq:abp-computation} coincide with~\eqref{eq:abp-ideals} and thus the ABP computes the generator $F$ of the last ideal.
\end{proof}

Geometrically, this implies that $\Zero(F)$ contains a chain of subvarieties $X_1 \subset X_2 \subset \dots \subset X_{d - 1} \subset \Zero(F)$ where each $X_k$ is cut out by $w_k$ polynomials of degree $k$.

\bibliography{paper}

\end{document}